\documentclass{article}
\usepackage{graphicx} 
\usepackage{rotating} 
\usepackage{geometry}
\usepackage{tikzmacros}
\usepackage{subcaption}
\usetikzlibrary{decorations.pathreplacing,positioning}

\usepackage{amsfonts, amsmath, amssymb, amsthm}
\usepackage{bbm}

\usepackage{hyperref}
\usepackage[capitalize]{cleveref}

\usepackage{subcaption}

\newtheorem{definition}{Definition}[section]
\newtheorem{theorem}{Theorem}[section]
\newtheorem{lemma}{Lemma}[section]

\newcommand{\OPT}{\mathsf{OPT}}
\newcommand{\IB}{\mathsf{IB}}

\newcommand{\LB}{\mathsf{LB}}

\newcommand{\Greedy}{\textsf{Greedy}}
\newcommand{\Splay}{\textsf{Splay}}
\newcommand{\Tango}{\textsf{Tango}}
\newcommand{\MultiSplay}{\textsf{MultiSplay}}
\newcommand{\ChainSplay}{\textsf{ChainSplay}}
\newcommand{\GGreedy}{\textsf{GGreedy}}
\newcommand{\SmoothHeap}{\textsf{SmoothHeap}}
\newcommand{\Ancestors}{\mathsf{Ancestors}}
\newcommand{\Descendants}{\mathsf{Desc}}
\newcommand{\LeftSiblings}{\mathsf{LSiblings}}
\newcommand{\RightSiblings}{\mathsf{RSiblings}}
\newcommand{\Siblings}{\mathsf{Siblings}}

\newcommand{\ind}{\mathbbm{1}}
\newcommand{\ICost}{\mathsf{ICost}}
\newcommand{\GCost}{\mathsf{GCost}}
\newcommand{\Cost}{\mathsf{Cost}}
\newcommand{\Visible}{\mathsf{Visible}}

\newcommand{\Pref}{\mathsf{Pref}}
\newcommand{\LeftIntervals}{\mathsf{LIntervals}}
\newcommand{\RightIntervals}{\mathsf{RIntervals}}
\newcommand{\Intervals}{\mathsf{Intervals}}
\newcommand{\InvisiblePref}{\mathsf{InvisPref}}

\newcommand{\Patrascu}{P{\v{a}}tra\c{s}cu}

\title{The Greedy Binary Search Tree is Non-trivially Competitive\thanks{This work was conducted while the first and fourth authors were visiting University of Michigan, Ann Arbor.  Supported by NSF Grant CCF-2446604.}}

\author{Yuhao Guo\\{IIIS, Tsinghua University}
\and 
Seth Pettie\\
{University of Michigan}
\and
Daniel Skora\\
{University of Michigan}
\and
Chengzhang Wan\\
{IIIS, Tsinghua University}
}

\date{} 

\begin{document}

\maketitle

\begin{abstract}
    We prove that the $\Greedy$ binary search tree is 
    $2^{O(\sqrt{\log\log n})}$-competitive.  
    It is widely conjectured that $\Greedy$ is 
    $O(1)$-competitive,
    but before this work it was not known to be 
    $f$-competitive, for any non-trivial $f(n)=o(\log n)$.

\medskip 

    Our analysis differs from prior analyses of binary search trees.
    It takes what might be called a ``scaling'' approach, 
    where the cost at a refined scale is related to 
    the cost at a coarser scale, and Wilber's interleave lower bound.
\end{abstract}

\section{Introduction}

The \emph{dynamic optimality conjecture}~\cite{ST85} 
concerns the existence of an online dynamic binary 
search tree that is $O(1)$-competitive with the offline optimum.
All previous results on the dynamic optimality question can
be divided into two groups: 
\begin{description}
    \item[(1)] those that design a new dynamic 
    binary search tree with the explicit goal of proving
    $f(n)$-competitiveness, 
    for some non-trivial function $f(n) = o(\log n)$, and
    \item[(2)] those that try to analyze one of the two premier candidates for 
    dynamic optimality, namely $\Splay$~\cite{ST85} and $\Greedy$~\cite{Lucas,Munro00,DemaineGeometry}.
\end{description}

The most notable result in group (1) is certainly 
Demaine, Harmon, Iacono, and \Patrascu's~\cite{DemaineHIP07} $\Tango$ algorithm,
which is $O(\log\log n)$-competitive.  
Other results in this direction
are~\MultiSplay~\cite{WDS06} and \ChainSplay~\cite{Georgakopoulos08}, 
also $O(\log\log n)$-competitive, and \GGreedy~\cite{Chalermsook24}, 
which is $O(\sqrt{\log n})$-competitive.

\medskip
In group (2) the focus has been  analyzing $\Splay${} and $\Greedy$ on access
sequences that have some spatial or temporal locality, or on 
\emph{corollaries} of dynamic optimality on structured access sequences.
$\Splay$ and $\Greedy$ both satisfy the \emph{working set} 
bound~\cite{ST85,Fox11} (temporal locality) and the \emph{dynamic finger} bound~\cite{Cole00,ColeEtal00,IaconoL16} (spatial locality).
They both support \emph{sequential access} in optimal $O(n)$ time~\cite{Tarjan85,Sundar92,Elmasry04,Lucas,Fox11}, 
and
\emph{deque} sequences in nearly-linear time~\cite{Sundar92,Pettie08,Chalermsook23}.
Many access patterns can be characterized by a forbidden permutation,
e.g., sequential access is the unique permutation avoiding $(21)$, 
while deletion-only deque sequences avoid $(231)$ and $(213)$.
Both $\Splay$ and $\Greedy$ handle \emph{Postorder} sequences ($(312)$-avoiding) 
in $O(n)$ time~\cite{Chalermsook23,LevyT19}. 
\emph{Preorder} sequences (aka \emph{split}~\cite{Lucas88b} or \emph{traversal}~\cite{ST85} sequences, $(231)$-avoiding) seem more difficult.
$\Splay${} handles special cases in linear~\cite{ChaudhuriH90,LevyT19} or near-linear time $O(n\alpha(n))$~\cite{Lucas88b}, 
whereas $\Greedy$ handles all preorder sequences in $O(n2^{\alpha(n)})$ time.  
Following~\cite{ChalermsookGKMS15,Chalermsook23}, 
Chalermsook, Pettie, and Yingchareonthawornchai~\cite{Chalermsook24} 
proved that for any fixed permutation $\pi$, $\Greedy$ handles \emph{any} 
$\pi$-avoiding access sequence in $n2^{(1+o(1))\alpha(n)}$ time, 
where the $o(1)$ depends on $\pi$.

\medskip 
Very recently, and in independent work,
Chmel, Haeupler, Hlad\'{i}k, 
Kouck\'{y}, Roeyskoe, 
Rozho\v{n}, 
Sladk\'{y}, Tarjan~\cite{ChmelHHKRRST26}
proved that \Splay{} is $O(\log\log n(\log\log \log n)^2)$-competitive.
In this paper we prove the first  
competitiveness result for $\Greedy$.

\begin{theorem}\label{thm:intro-main-theorem}
    $\Greedy$ is $2^{O(\sqrt{\log\log n})}$-competitive.
\end{theorem}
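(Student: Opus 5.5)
We plan to prove the theorem by a scaling recursion that compares $\Greedy$'s cost on a key range of size $n$ to its cost on a coarsened instance, and charges the difference to Wilber's interleave lower bound $\IB$. We work in the geometric view: the access sequence is a point set $X$ in the plane, and $\Greedy$ adds the points needed to make $X$ arborally satisfied. We fix a branching parameter $b$ and partition the key space $[n]$ into $b$ contiguous blocks of size $n/b$. The coarse instance $X'$ collapses each block to a single key. The fine instances $X_1,\dots,X_b$ restrict $X$ to the accesses inside each block. Writing $\GCost(X)$ for $\Greedy$'s cost, we aim for a recursive inequality of the form
\[
\GCost(X) \;\le\; C\cdot\Bigl(\GCost(X') + \sum_{i} \GCost(X_i)\Bigr) + C'\cdot |X| ,
\]
with constants $C, C'$ independent of $b$. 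We also need the complementary fact that the lower bound decomposes superadditively across scales: $\OPT(X) \gtrsim \IB(X') + \sum_i \IB(X_i)$. This follows from the structure of the interleave bound on the lower levels of the reference tree.

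\textbf{Step 1: controlling Greedy's cost at the block boundaries.} We would split the points $\Greedy$ adds into two kinds. \emph{Intra-block} points are those whose "visibility" witness lies in the same block as the accessed key. \emph{Inter-block} points are the rest. For the intra-block points, we argue that the $\Greedy$ execution on $X$ restricted to block $i$ is dominated by $\Greedy$ run on $X_i$, up to extra touches caused by accesses outside the block. Such outside accesses touch at most the leftmost/rightmost "visible" elements of the block, and those correspond to accesses in $X'$. For the inter-block points, we project onto the coarse instance. Each coarse touch of a block in $\Greedy(X')$ should pay for $O(1)$ inter-block touches, plus, within a block, a staircase whose length we must bound. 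That staircase is where the loss arises. We expect to bound it by something like $O(\log b)$ per coarse touch, either via the working-set or dynamic-finger properties of $\Greedy$ cited above, or via a cruder but uniform bound.

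\textbf{Step 2: the recursion and parameter choice.} Iterating the recursion with depth $d$ and branching $b$ with $b^d=n$, each level multiplies the cost by a constant factor $C$. It also contributes an additive $\mathrm{poly}(\log b)$ times the interleave bound at that level; the base case is a block of size $b$, where trivially $\GCost \le O(\log b)\cdot\OPT$. Summing over levels, we get a competitive ratio of roughly $C^{d}\cdot \log b$. With $d=\log_b n$, we choose $\log b = 2^{\sqrt{\log\log n}}$, so that $d = \log n/\log b$. This is not quite balanced, so instead we apply the recursion on the $\log\log$ scale: blocks of doubly-exponential sizes $n^{1/2^k}$. The depth is then $\sqrt{\log\log n}$ levels, each handled with a loss of $2^{O(\sqrt{\log\log n})}$. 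The total is $2^{O(\sqrt{\log\log n})}$, matching the theorem. Concretely, the balance is that the per-level multiplicative constant $C$ compounds over $k$ levels while the direct bound at the bottom costs $\log(\text{block size})$. Optimizing $C^k \cdot \log n / 2^{k^2}$-type tradeoffs gives $k\approx\sqrt{\log\log n}$.

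\textbf{Main obstacle.} The hard step is Step 1, specifically proving that the multiplicative constant $C$ in front of $\GCost(X')$ is truly $1+o(1)$ or at least a fixed constant. It must not grow with $b$, since any $\log b$ factor there would destroy the recursion. Our first attempt would be a charging argument on the geometric $\Greedy$ staircases. When $\Greedy$ touches several keys in the same block during one access, all but $O(1)$ of them should be charged either to the fine instance's $\Greedy$ (via a simulation lemma showing that $\Greedy$ on a subsequence "embeds" into $\Greedy$ on the full sequence), or to an interleave in the coarse reference tree. If a clean simulation lemma fails, we would fall back on a weaker hybrid that charges to $\IB$ directly. We would accept a $2^{O(\sqrt{\log\log n})}$ slack by letting $C$ depend mildly on the level.
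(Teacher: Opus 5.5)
\textbf{There is a genuine gap: the key inequality is never proved, and Step~2's parameters would not reach $2^{O(\sqrt{\log\log n})}$ even if it were.}

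Everything rests on the Step~1 inequality $\Greedy(X)\le C\bigl(\Greedy(X')+\sum_i\Greedy(X_i)\bigr)+C'|X|$. Here I write $\Greedy(\cdot)$ for total cost, since the paper uses $\GCost$ for a different, group cost. You need $C$ independent of $b$, for blocks that are contiguous in key space but arbitrarily interleaved in time, and you give no mechanism for it. The natural simulation fails. When $\sigma(t)$ lies in block $A$, whether a block $B$ to its right is touched in $X$ depends only on the heights of the keys of $A$ lying to the right of $\sigma(t)$. In $X'$ it depends on the height of all of $A$. So the block-level touches of $X$ need not coincide with those of $X'$, and the discrepancy propagates through all later heights. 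Your fallback, bounding the within-block staircase by $O(\log b)$ via working-set or finger bounds, is exactly the $\log b$ loss you note would destroy the recursion.

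Even granting Step~1 with a constant $C>1$, Step~2 does not give $2^{O(\sqrt{\log\log n})}$:
\begin{itemize}
\item With uniform branching, $C^d\log b=C^d\log n/d\ge (e\ln C)\log n$ for every $d$.
\item With van Emde Boas splitting $N\to\sqrt N$, the recursion $\rho(N)\le C\rho(\sqrt N)+O(1)$ gives $C^{\log\log n}=(\log n)^{\log_2 C}$.
\item If you skip scales to cut the depth to $\sqrt{\log\log n}$, each coarse instance has $N^{1-o(1)}$ keys, so its trivial $O(\log b)$ bound is $\Theta(\log N)$.
\end{itemize}
The ``$C^k\log n/2^{k^2}$'' tradeoff is not derived from anything.

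The missing idea is a comparison between scales $i<j$ whose additive loss is $2^{j-i}$, exponential in the number of scales skipped, rather than $\log b$. The paper gets this without splitting into sub-instances. It works with the single $\Greedy$ execution and reference trees $R_i$ of branching $2^{2^i}$, and measures cost at each scale in two ways:
\begin{itemize}
\item the \emph{individual} cost $\ICost_i$, counting touched sibling intervals of ancestors of $\sigma(t)$;
\item the \emph{group} cost $\GCost_i$, lumping each ancestor's left siblings and right siblings into one group each.
\end{itemize}
Then $\GCost_k\le 2m$ trivially, $\GCost_0=\ICost_0$, and $\ICost_k+m$ is the true cost.

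Potentials counting visible groups drive the comparison: touching $r$ intervals from outside makes $r-2$ of them invisible, and this is what pays for your staircases. They yield
\[
\GCost_i\le 2\GCost_j+2^{j-i}\bigl(2\IB(\sigma,R_j)+4n\bigr),\qquad
\ICost_j\le 2\ICost_i+2^{j-i}\bigl(2\IB(\sigma,R_i)+2n+2\GCost_j\bigr).
\]
Here $2^{j-i}$ is the number of $R_i$-levels on one $R_j$-edge, paid once per preferred-child switch. Chaining down through $\GCost$ and back up through $\ICost$ in $\sqrt{k}$ jumps of $\sqrt{k}$ scales, with $k=\log\log n$, costs a factor $2^{\sqrt k}$ from the leading $2$'s and $2^{O(\sqrt k)}$ from the additive terms.
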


Given the decades of research into dynamic optimality 
and its corollaries, 
the proof of \Cref{thm:intro-main-theorem} is 
unexpectedly simple.  
This is partly due to us using
the geometric characterization of Lucas's \Greedy{} algorithm~\cite{Lucas}
from 
Demaine, Harmon, Iacono, 
Kane, and \Patrascu~\cite{DemaineGeometry},
and partly due to a new style of analysis
that takes a ``scaling'' approach.  We define
several cost measures that operate at different scales,
and prove a couple lemmas 
relating two cost measures.

\subsection{Related Work}

One way to prove $f(n)$-competitiveness is to reason directly
about the relationship between the binary search tree under
consideration and an arbitrary (optimal) tree handling the same access sequence.  The Chmel et al.~\cite{ChmelHHKRRST26} result takes this approach, but most prior
work~\cite{DemaineHIP07,WDS06,Georgakopoulos08,Chalermsook24}
does not.  
Rather, they start from some
lower bound $\LB(\sigma)$ on the time needed to handle the particular 
access sequence $\sigma\in [n]^m$, $m\geq n$, 
then argue that the algorithm takes at most $f(n)\LB(\sigma)$ 
time.

Via a simple encoding argument, it is easy to show that, among the $n!$ permutations, at most $2^{O(c)}$ have complexity at most $c$.  In particular, nearly all permutations require $\Omega(n\log n)$ time, so static, balanced binary search trees are optimal on most access patterns.  Wilber~\cite{Wilber89}
established two intrinsic measures of hardness for access sequences, now commonly called \emph{Wilber I} or the \emph{interleave bound}, 
and \emph{Wilber II} or the \emph{funnel bound}.  As a consequence, 
Wilber demonstrated that the \emph{bit reversal} permutation 
is an example of a ``hard'' access sequence requiring 
$\Omega(n\log n)$ time.  Our competitiveness result, 
like~\cite{DemaineHIP07,WDS06,Georgakopoulos08,Chalermsook24},
uses the interleave bound.  
Lecomte and Weinstein~\cite{LecomteW20} proved that
the funnel bound is always as strong as the interleave bound,
and as much as a $\Theta(\log\log n)$-factor stronger on 
some access sequences.  Wilber's original definition of funnel 
treated time and space differently.  It is most natural
to express the access permutation\footnote{In the context of the dynamic optimality question, one may assume without loss of generality that the access sequence is a permutation.  See~\cite[Appendix E]{ChalermsookGKMS15-tr}.} as a
permutation matrix $P_\sigma \in \{0,1\}^{n\times n}$,
with one axis corresponding to time and the other to space.
Lecomte and Weinstein~\cite{LecomteW20} also reformulated
the funnel bound so that it is obviously indifferent to time and space, that is, the funnel bound of $P_\sigma$ is the same as $P_\sigma^{\top}$
and the reflections of $P_\sigma$ with time or space reversed.

In a 2005 technical report, 
Derryberry, Sleator, and Wang~\cite{DerryberrySW05}
gave a lower bound that subsumes the interleave and funnel bounds,
by considering how many axis-aligned boxes could be added to $P_\sigma$ 
according to certain rules.
In independent work, 
Demaine, Harmon, Iacono, Kane, and \Patrascu~\cite{DemaineGeometry}
\emph{characterized} valid binary search tree executions as
\emph{arboreally satisfied supersets} of $P_\sigma$.  
This provided a cleaner way to look at lower bounds, 
that subsumed~\cite{Wilber89,DerryberrySW05}, though it is still unknown
whether these absolute lower bounds are \emph{strictly} stronger than~\cite{DerryberrySW05} or Wilber's funnel bound~\cite{Wilber89}.
The arboreally satisfied superset perspective also makes it easier to talk 
about some binary search tree \emph{algorithms}.
Lucas's~\cite{Lucas} $\Greedy$ heuristic, for example, 
becomes trivial to describe when viewed from this perspective.

The $\Greedy$ algorithm, as Lucas~\cite{Lucas} originally described it in 1988
and Munro~\cite{Munro00} later rediscovered in 2000, 
restructures the access path to $\sigma(t)$ to optimally handle
\emph{future} accesses $\sigma(t+1),\sigma(t+2),\ldots$.  
Demaine et al.~\cite{DemaineHIP07} proved that it can be simulated online 
with an $O(1)$-factor slowdown.  Thus, in context we use \emph{$\Greedy$} 
to mean 
either the online or offline version.
Demaine et al.~\cite{DemaineGeometry} noted that on some access sequences $\sigma\in[n]^m$, $\Greedy$ is suboptimal by an additive $+m$. They reiterated Lucas's conjecture that $\Greedy$ is $O(1)$-competitive (dynamically optimal), 
and made a stronger conjecture that it is optimal up to an additive $+O(m)$.  
This stronger conjecture was later refuted by Sadeh and Kaplan~\cite{SadehK23},
who proved that on some access sequences, $\Greedy$ is suboptimal by an 
additive $+\Omega(m\log\log n)$, and that it cannot be (multiplicatively) 
$c$-competitive for any $c<2$.

Kozma and Saranurak~\cite{KozmaS20}
introduced a self-adjusting priority queue called the \emph{\SmoothHeap}, and proved that it is ``dual'' to the \Greedy{} BST in the following sense. When the \SmoothHeap{} is used to 
sort a permutation\footnote{\textsf{Insert} the $n$ elements then perform $n$ \textsf{DeleteMin}s.} 
$\sigma : [n]\to [n]$, the cost and pattern of elements touched is \emph{identical} to sorting the inverse permutation $\sigma^{-1}$ with \Greedy.\footnote{\textsf{Insert} the elements into an empty BST one-by-one.}
The optimum BST cost of $P_\sigma$ is invariant under rotation/reflection, so all of our results on \Greedy{} sorting apply to \SmoothHeap{} sorting as well.  The \SmoothHeap{} is \emph{naturally} an online algorithm, which makes it more attractive than \Greedy{} for sorting.

\subsection{Organization}

\Cref{sect:preliminaries} reviews some
well known results, such as Demaine et al.'s~\cite{DemaineGeometry}
characterization of binary search tree executions
as arboreally satisfied supersets, 
what the \Greedy{} algorithm looks like in this framework, and Wilber's~\cite{Wilber89} interleave lower bound, then proceeds to define a family
of references trees and a series of 
associated cost measures.
\Cref{sect:competitive-analysis-of-greedy}
proves two lemmas about the relationship between
different cost measures, which implies \Cref{thm:intro-main-theorem}.
We conclude with some remarks in \Cref{sect:conclusion}.

\section{Preliminaries}\label{sect:preliminaries}

We now describe the geometric view of binary search trees, and of $\Greedy$ in particular, 
due to Demaine, Harmon, Iacono, Kane, and \Patrascu~\cite{DemaineGeometry}.  
Consider a point set $P\subset \mathbb{Z}\times\mathbb{Z}$.  For any two points $p_1=(x_1, y_1)$ and $p_2=(x_2, y_2)$ in $P$, we say $p_1$ and $p_2$ are \emph{arboreally satisfied} with respect to $P$ if either (1) $p_1$ and $p_2$ lie on the same horizontal or vertical line or (2) the rectangle $[x_1, x_2]\times [y_1, y_2]$ contains some point of $P\setminus \{p_1, p_2\}$.  We say the point set $P$ is arboreally satisfied if all pairs of points in $P$ are arboreally satisfied with respect to $P$.

Demaine et al.~\cite{DemaineGeometry} showed that every BST execution maps onto an arboreally satisfied point set, and every such point set is mapped to by at least one BST execution.  In particular, the access $\sigma(i)$ causes the BST to reshape a connected subtree $\tau(i)$ containing $\sigma(i)$ and the root.
The execution is represented as a point-set $P'$ where for each 
$j\in\tau(i)$, $(j,i)\in P'$.
$P'$ is arboreally satisfied.
Thus, the optimum cost of accessing $\sigma$ is the minimum cardinality
arboreally satisfied superset $P'\supset P_\sigma = \{(\sigma(i),i) \mid i\in [m]\}$.

\begin{figure}[h]
\centering
\scalebox{1.5}{$
\begin{NiceMatrix}[name=A]
 & \bullet & \bullet & \times \\
 \times & \bullet & & \\
 & \times & \bullet & \\
 & & \times & \\
 \CodeAfter
 \tikz{
    \draw[gray, line width=.5pt] (A-1-4.north east) rectangle (A-2-2.south west);
    \draw[gray, line width=.5pt] (A-1-4.north east) rectangle (A-3-3.south west);
}
\end{NiceMatrix}
$}
\caption{The geometric view of $\Greedy$ for the access sequence $\sigma=[3,2,1,4]$.  The two rectangles have in their lower left corners the two ``unsatisfied points'' with respect to the point $(4,4)$ representing the search at time $4$.}
\label{fig:greedy_ortho}
\end{figure}

The most natural ASS algorithm 
is the \Greedy{} line sweep algorithm. Given an input set 
$P_\sigma = \{(\sigma(i),i)\}$, 
we initialize $P'\gets P_\sigma$,
and process $P_\sigma$ row-by-row.
At row (time) $t$,
consider any point $(i',t')\in P'$ 
for $t'<t$.  
If the closed rectangle with corners at $(i',t')$ and $(\sigma(t), t)$ does not contain any points besides the corners themselves, then 
set $P' \gets P' \cup \{(i', t)\}$.
For each $(\sigma(t),t) \in P_\sigma$, 
we say $\sigma(t)$ was \emph{accessed} at time $t$, and for $(i',t)\in P'$ 
we say item $i'$ was \emph{touched} at time $t$.  Accessed points are depicted by $\times$, and all other points by $\bullet$.  An example is shown in \Cref{fig:greedy_ortho}.

If the \Greedy{} line-sweep algorithm is run directly on $P_\sigma$, this corresponds to picking an \emph{ideal} initial tree $T_0$.  To encode a specific tree $T_0$, we can run \Greedy{} on the point-set 
$P_\sigma \cup P_{T_0}$, where 
\[
P_{T_0} = \{(x, -\operatorname{depth}_{T_0}(x))  \mid x\in [n]\}.
\]
An example is shown in \Cref{fig:greedy-balanced-tree}.

The \Greedy{} line-sweep algorithm touches precisely the same elements~\cite{DemaineGeometry} 
as the offline \Greedy{} algorithm described by Lucas~\cite{Lucas} and Munro~\cite{Munro00}.
Demaine et al.~\cite{DemaineGeometry} proved that it 
could be implemented online, without knowledge of the 
future accesses, with an $O(1)$ factor slowdown. 
(Although the \emph{line-sweep} algorithm is online, it only encodes which elements are touched at a particular time, not how the tree is rearranged.  The ideal rearrangement is unknowable at the time of the access as it depends on the pattern of \emph{future} accesses.  Nonetheless, an $O(1)$-approximation to the ideal tree can be simulated online. 
For this reason, \Greedy{} may be represented as a multiway tree, with the binary representation of the elements in each node left indeterminate.  See~\Cref{fig:compare-views}.)

\begin{figure}[t]
\centering
$\begin{NiceMatrix}

&\bullet&&\times&&\bullet&\\
&\times&&&&\bullet&\\
&&&&&\bullet&\times\\
&\bullet&&\bullet&\times&\bullet&\\
\times&\bullet&&&&&\\
&\bullet&\times&\bullet&&&\\
&&&\bullet&&\times&\\
\hline
&&&\bullet&&&\\
&\bullet&&&&\bullet&\\
\bullet&&\bullet&&\bullet&&\bullet\\

\end{NiceMatrix}$
\caption{The geometric view of $\Greedy$ for some search sequence $\sigma$ on a balanced initial tree.}
\label{fig:greedy-balanced-tree}
\end{figure}

\begin{figure}[ht]
\centering
\begin{subfigure}[b]{0.48\textwidth}
\centering
\vspace{0pt}
\scalebox{0.9}{$
\begin{NiceMatrix}
& & & & & \bullet & \times & \bullet & & \bullet & & \\
& & & & & & & & & \bullet & \bullet & \times \\
& & & & & \bullet & & \bullet & \times & \bullet & & \\
& \bullet & & \bullet & \times & \bullet & & & & & & \\
\times & \bullet & & & & & & & & & & \\
& \times & \bullet & \bullet & \bullet & & \bullet & & & \bullet & & \\
& & & & & & & & & \bullet & \times & \\
& & & & & \bullet & & \times & & \bullet & & \\
& & & & & \bullet & & & & \times & & \\
& & & \bullet & & \times & & & & & & \\
& & \bullet & \times & & & & & & & & \\
& & \times & & & & & & & & & \\
\end{NiceMatrix}
$}
\caption{The geometric view of the execution of $\Greedy$, with no explicit initial tree $T_0$.}
\label{fig:temp-matrix}
\end{subfigure}
\hfill
\begin{subfigure}[b]{0.48\textwidth}
\centering
\vspace{0pt}
\begin{tikzpicture}[x=1cm,y=1cm,line width=0.9pt, scale=0.8]

  \SquareAt{1}{-0.5}{-1}[12]
  \SquareAt{2}{-1}{-1}[11]
  \ShadedSquareAt{3}{-1.5}{-1}[10]

  \SquareAt{4}{-1.5}{-2}[9]
  \ShadedSquareAt{5}{-2}{-2}[8]
  \ShadedSquareAt{7}{-2.5}{-2}[6]
    
  \ShadedSquareAt{6}{-2}{-3}[7]
  \SquareAt{8}{-3}{-3}[5]
  \SquareAt{9}{-3.5}{-3}[4]
  \SquareAt{11}{-4}{-3}[2]
    
  \SquareAt{10}{-3.5}{-4}[3]
  \SquareAt{12}{-4.5}{-4}[1]

  \ConnectChild{3}{5}{L}
  \ConnectChild{7}{9}{L}
  \ConnectChild{7}{6}{R}
  \ConnectChild{11}{12}{L}
  \ConnectChild{11}{10}{R}

  \SquareAt{15}{1}{-3.5}[10]
  \SquareAt{17}{0.5}{-3.5}[8]
  \SquareAt{18}{0}{-3.5}[7]
  \SquareAt{19}{-0.5}{-3.5}[6]

  \SquareAt{20}{-0.5}{-4.5}[5]
  \SquareAt{21}{-1}{-4.5}[4]
  \SquareAt{23}{-1.5}{-4.5}[2]

  \SquareAt{13}{2}{-4.5}[12]
  \SquareAt{14}{1.5}{-4.5}[11]
  \SquareAt{16}{0.5}{-4.5}[9]
    
  \SquareAt{22}{-1}{-5.5}[3]
  \SquareAt{24}{-2}{-5.5}[1]

  \draw (1.75, -4.25) -- (1.25, -3.75);
  \ConnectChild{19}{21}{L}
  \ConnectChild{15}{16}{L}
  \ConnectChild{23}{24}{L}
  \ConnectChild{23}{22}{R}
\end{tikzpicture}
\vspace*{.7cm}

\caption{The corresponding tree-based representation, shown before and after $7$ is accessed.}
\label{fig:temp-tree}
\end{subfigure}
\caption{Two equivalent views of $\Greedy$.}
\label{fig:compare-views}
\end{figure}

\medskip 

Define $\OPT(\sigma,T_0)$ to be the minimum cardinality
of an arboreally satisfied superset of 
$P_\sigma \cup P_{T_0}$ in the positive quadrant,
and $\Cost(\sigma,T_0)$ to be the size of the arboreally satisfied superset selected by \Greedy.
Our goal is to bound the competitive ratio $\Cost(\sigma,T_0)/\OPT(\sigma,T_0)$.
The initial tree does not play a big role in our analysis,
so we often omit it from this notation and write 
$\OPT(\sigma),\Cost(\sigma)$.

\subsection{The Interleave Bound}

Wilber's \cite{Wilber89} \emph{interleave bound} $\IB(\sigma)$ lower bounds $\OPT(\sigma,T_0)$ for any $\sigma\in[n]^m$ and $T_0$.  It is defined with respect to an arbitrary \emph{reference tree} $R$ on leaf-set $[n]$.\footnote{It can also be defined w.r.t.~an $R$ whose node-set is $[n]$; see~\cite{DemaineHIP07}.}
At time 0 each internal node of $R$ 
picks its first child to be its \emph{preferred child}.  
Let $a$ be an ancestor of (leaf) $\sigma(t)$, and $b$ be the child of $a$ that is an ancestor of $\sigma(t)$.  After access $\sigma(t)$, $b$ becomes the preferred child of $a$. 
Define $\Lambda(\sigma,R,t)$ to be the number of ancestors of $\sigma(t)$ that \emph{change} their preferred child in response to access $\sigma(t)$.  
The interleave bound with respect to $R$ is:
\[
\IB(\sigma, R) = \sum_{t\in [m]} \Lambda(\sigma, R, t).
\]

\begin{theorem}[Wilber~\cite{Wilber89}]
Fix an access sequence $\sigma\in[n]^m$.  
For any reference tree $R$ and initial tree $T_0$,
\[
\OPT(\sigma,T_0) \geq \Omega(\IB(\sigma, R) + m).
\]
\end{theorem}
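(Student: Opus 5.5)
We will prove Wilber's bound in the geometric (arboreally satisfied superset) framework. Fix any arboreally satisfied superset $P'\supseteq P_\sigma\cup P_{T_0}$. We will show $|P'| \ge m + \IB(\sigma,R)/2 - O(n)$. The term $m$ is immediate, since $P'\supseteq P_\sigma$ contains one point per row. The term $-O(n)$ is harmless: the $n-1$ initial preferred-child choices can contribute at most $n$ extra changes, and any ``first touch'' cost is dominated by $|P_{T_0}|=n$. The plan is to charge each preferred-child change to a distinct point of $P'$, with each point charged at most $O(1)$ times.

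For each internal node $v$ of $R$, let $I_v$ be the interval of leaves below $v$, split at a boundary $s_v$ into left part $L_v$ and right part $R_v$. Consider the accesses of $\sigma$ falling in $I_v$, in time order, and label each by $L$ or $R$. A preferred-child change at $v$ is exactly a consecutive pair $t_1<t_2$ in this subsequence with different labels; say $\sigma(t_1)\in L_v$ and $\sigma(t_2)\in R_v$. Arboreal satisfaction applies to the pair $(\sigma(t_1),t_1),(\sigma(t_2),t_2)$, so the rectangle between them contains another point of $P'$. We want to extract from this a point of $P'$ lying \emph{inside the slab} $I_v\times(t_1,t_2]$ and \emph{adjacent to the boundary} $s_v$. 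Concretely, we take the topmost or bottommost point among those whose $x$-coordinate is nearest to $s_v$ on one side. Iterating arboreal satisfaction toward the boundary shows that some point of $P'$ in rows $(t_1,t_2]$ lies in $I_v$, with $x$ the predecessor or successor of $s_v$ among that row's points in $I_v$. The standard choice is the point of $P'$ in a row of $(t_1,t_2]$ that is ``crossing'' $s_v$. The intervals $(t_1,t_2]$ for different changes at $v$ are disjoint, so charges from the same node $v$ land in disjoint time ranges.

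The key step, and the main obstacle, is to make the charges across different nodes $v$ injective, up to a constant factor. The charged point must be identified with a pair (node $v$, time interval), so that a single point $(x,t)\in P'$ is claimed by at most $O(1)$ nodes. First we will try the following rule. Charge the change at $v$ to the point of $P'$ in $I_v\times(t_1,t_2]$ that is closest to $s_v$ from the left (or right) in the first row of $(t_1,t_2]$ where $P'$ has points on both sides of $s_v$ within $I_v$. Then a point $(x,t)$ can be charged by $v$ only if $x$ is the nearest point of row $t$ to $s_v$ inside $I_v$. Among the ancestors $v$ of leaf $x$, the boundaries $s_v$ are nested. Each row point is therefore the left/right neighbor of the boundary for a chain of ancestors, and disjointness in time per node must be combined with nestedness to bound multiplicity.

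If this direct rule over-counts, we will fall back on the cleaner argument from Demaine et al. We will show by induction on the structure of $R$ that restricting $P'$ to $I_v$, and collapsing $L_v$ and $R_v$ to single columns, yields an arboreally satisfied set. Then the points of $P'$ touching the boundary pair of columns can be assigned to $v$ alone, for example via ``the lowest common ancestor in $R$ of two horizontally consecutive points in the row''. Each consecutive pair of points in a row has a unique lowest common ancestor $v$. Assigning charges via consecutive pairs gives injectivity, since a row with $k$ points has $k-1$ consecutive pairs. It then remains to verify that each alternation at $v$ produces a row in $(t_1,t_2]$ whose consecutive pair straddles $s_v$ with lowest common ancestor exactly $v$. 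This follows because points nearest $s_v$ on each side within $I_v$ are consecutive and straddle $s_v$, and we will derive their existence from arboreal satisfaction of the two access points. Summing over $v$ gives $|P'|\ge \IB(\sigma,R)/2 - n$, and combining with $|P'|\ge m$ gives the theorem.
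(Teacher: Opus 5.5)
The paper does not prove this statement. It quotes Wilber's theorem and uses it as a black box, so there is no in-paper argument to compare your route against. Your fallback argument is a valid self-contained proof in the geometric model; in effect it is Wilber's transition-point argument transcribed to point sets. An alternation at $v$ forces a row of $P'$ with points on both sides of $s_v$ inside $I_v$. The two innermost such points are horizontally consecutive in that row, and their lowest common ancestor in $R$ is exactly $v$. A row with $k$ points has only $k-1$ consecutive pairs, so $|P'|\ge m+(\text{number of charged pairs})$. This yields an explicit constant, which the paper never needs. You can drop both the first ``nearest point'' rule and the induction/column-collapsing step.

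One claim is false as written and must be repaired: the straddling row need not lie in $(t_1,t_2]$. Take $R$ with one internal node $v$ over leaves $1,2$, $\sigma=(1,2)$, and $T_0$ with root $2$ and left child $1$. \Greedy{} itself produces exactly the points $(1,1),(2,1),(2,2)$ in rows $1,2$. Row $2=t_2$ has no point of $L_v$, so the only straddling row is $t_1$.

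The fix is to use the closed window $[t_1,t_2]$ and to make the existence step explicit by an extremal choice, since one application of arboreal satisfaction to the two access points is not enough. Say $\sigma(t_1)\in L_v$.
\begin{itemize}
\item Let $a$ be the last row in $[t_1,t_2]$ with a point of $P'$ in $L_v$.
\item Let $b$ be the first row in $[a,t_2]$ with a point in $R_v$.
\item Suppose $a<b$. The rightmost $L_v$-point of row $a$ and the leftmost $R_v$-point of row $b$ span a rectangle inside $I_v\times[a,b]$. A third point in that rectangle would contradict the choice of $a$, of $b$, or of the two extreme points. This contradicts arboreal satisfaction, so $a=b$.
\end{itemize}

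With closed windows, successive alternations at the same $v$ can share an endpoint when a run has length one. So each consecutive pair is charged at most twice rather than once, and your claim that the windows are disjoint is wrong. The factor of two is exactly what your $\IB(\sigma,R)/2$ absorbs. Combine this with at most $n-1$ initial preference changes and with $|P'|\ge n+m$, which holds because $\OPT(\sigma,T_0)$ counts the points of $P_{T_0}$. The theorem then follows.
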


The interleave bound is known to be asymptotically not tight, even under an optimal choice of reference tree~\cite{LecomteW20}.


\begin{figure}[t]
\centering
\begin{tikzpicture}[
    every node/.style={circle, draw, inner sep=1.5pt, minimum size=5pt}
]

\node (r) at (0,0) {};
\node (a) at (-2.4,-1.1) {};
\node (d) at (2.4,-1.1) {};

\node (b) at (-3.6,-2.2) {};
\node (c) at (-1.2,-2.2) {};
\node (e) at (1.2,-2.2) {};
\node (f) at (3.6,-2.2) {};

\node[label=below:{$1$}] (l1) at (-4.2,-3.3) {};
\node[label=below:{$2$}] (l2) at (-3.0,-3.3) {};
\node[label=below:{$3$}] (l3) at (-1.8,-3.3) {};
\node[label=below:{$4$}] (l4) at (-0.6,-3.3) {};
\node[label=below:{$5$}] (l5) at (0.6,-3.3) {};
\node[label=below:{$6$}] (l6) at (1.8,-3.3) {};
\node[label=below:{$7$}] (l7) at (3.0,-3.3) {};
\node[label=below:{$8$}] (l8) at (4.2,-3.3) {};

\draw[dashed] (r) -- (d);

\draw[dashed] (d) -- (e);
\draw[dashed] (d) -- (f);
\draw[dashed] (a) -- (c);
\draw[dashed] (a) -- (b);

\draw[dashed] (b) -- (l1);
\draw[dashed] (b) -- (l2);
\draw[dashed] (c) -- (l3);
\draw[dashed] (c) -- (l4);

\draw[dashed] (e) -- (l5);
\draw[dashed] (e) -- (l6);
\draw[dashed] (f) -- (l7);
\draw[dashed] (f) -- (l8);

\draw[->, very thick] (r) -- (a);
\draw[->, very thick] (a) -- (c);
\draw[->, very thick] (d) -- (f);

\draw[->, very thick] (b) -- (l1);
\draw[->, very thick] (c) -- (l4);
\draw[->, very thick] (e) -- (l6);
\draw[->, very thick] (f) -- (l7);

\end{tikzpicture}
\caption{A visualization of the preferred child pointers for a particular reference tree following the execution of some access sequence $\sigma$.  The last item to be accessed was $4$.}
\label{fig:interleave}
\end{figure}

\subsection{Reference Trees and Cost Measures}

Assume without loss of generality that $n=2^{2^k}$ for some $k=p^2$ and integer $p=\sqrt{\log\log n}$.  We consider a collection of reference trees $\{R_i\}$ for $0\le i\le k$.  Each $R_i$ is a rooted, depth $2^{k-i}$ tree with branching factor $2^{2^i}$ and $n$ leaves labeled by the elements $[n]$.  In particular, the reference tree $R_0$ is a rooted binary tree with depth $2^k$ and $n$ leaves, and each $R_i$ is obtained by ``compressing'' adjacent layers of $R_{i-1}$ to halve its depth.  We consider nodes in different reference trees to be the same if they have the 
same leaf-descendants, so 
$V(R_0) \subset V(R_1) \subset \cdots \subset V(R_k)$.

Let $0\le i\le j\le k$, and $x$ be an arbitrary node in $R_i$. The following notation will be used throughout the proof.

\begin{itemize}
    \item Define $\Ancestors_i(x)$ to be the set of ancestors of $x$ in $R_i$, including $x$ itself.
    
    \item Define $\LeftSiblings_i(x)$ to be the set of siblings of $x$ in $R_i$ strictly to the left to $x$, and define $\RightSiblings_i(x)$ similarly. Define $\Siblings_i(x)=\LeftSiblings_i(x)\cup \RightSiblings_i(x)$.
    
    \item Define $\Descendants_j(x)$ to be the set of nearest descendants of $x$ in $R_i$ that also belong
    to $R_j$, and define $\Descendants_j^*(x)$ in the same way with respect to \emph{strict} descendants of $x$.
    In other words, $\Descendants_i(x)=\{x\}$, $\Descendants^*_i(x)$ is the set of children of $x$ in $R_i$, and $\Descendants_k(x)$ are the leaf-descendants of $x$ in $R_i$, or in fact any reference tree containing $x$, if $x$ is not the root.  We treat $\Descendants_k(x)$ as an \emph{interval} of $[n]$.

    \item Define $\Pref_{j,t}(x)\in \Descendants^*_j(x)$ to be the unique element of $\Descendants^*_j(x)$ such that the path from $x$ to $\Pref_{j,t}(x)$ is composed of preferred-child edges at time $t$, that is, 
    the most recently accessed element at or before time $t$ in the subtree of $x$ also lies in the 
    subtree of $\Pref_{j,t}(x)$.
\end{itemize}

An example illustrating the notation is shown in \Cref{fig:notation}.

\begin{figure}[t]
\centering
\begin{tikzpicture}[
    scale=0.85,
    transform shape,
    every node/.style={circle, draw, inner sep=1.5pt, minimum size=5mm}
]

\node (A) at (0,0) {$A$};
\node (B) at (-4.0,-1.1) {$B$};
\node (C) at (4.0,-1.1) {$C$};

\node (D) at (-6.0,-2.2) {$D$};
\node (E) at (-2.0,-2.2) {$E$};
\node (F) at (2.0,-2.2) {$F$};
\node (G) at (6.0,-2.2) {$G$};

\node (H) at (-7.0,-3.3) {$H$};
\node (I) at (-5.0,-3.3) {$I$};
\node (J) at (-3.0,-3.3) {$J$};
\node (K) at (-1.0,-3.3) {$K$};
\node (L) at (1.0,-3.3) {$L$};
\node (M) at (3.0,-3.3) {$M$};
\node (N) at (5.0,-3.3) {$N$};
\node (O) at (7.0,-3.3) {$O$};

\node (1) at (-7.5,-4.4) {$1$};
\node (2) at (-6.5,-4.4) {$2$};
\node (3) at (-5.5,-4.4) {$3$};
\node (4) at (-4.5,-4.4) {$4$};
\node (5) at (-3.5,-4.4) {$5$};
\node (6) at (-2.5,-4.4) {$6$};
\node (7) at (-1.5,-4.4) {$7$};
\node (8) at (-0.5,-4.4) {$8$};
\node (9) at (0.5,-4.4) {$9$};
\node (10) at (1.5,-4.4) {$10$};
\node (11) at (2.5,-4.4) {$11$};
\node (12) at (3.5,-4.4) {$12$};
\node (13) at (4.5,-4.4) {$13$};
\node (14) at (5.5,-4.4) {$14$};
\node (15) at (6.5,-4.4) {$15$};
\node (16) at (7.5,-4.4) {$16$};

\draw[dashed] (A) -- (C);
\draw[->, very thick] (A) -- (B);
\draw[dashed] (B) -- (D);
\draw[->, very thick] (B) -- (E);
\draw[dashed] (C) -- (G);
\draw[->, very thick] (C) -- (F);
\draw[dashed] (D) -- (H);
\draw[->, very thick] (D) -- (I);
\draw[dashed] (E) -- (K);
\draw[->, very thick] (E) -- (J);
\draw[dashed] (F) -- (M);
\draw[->, very thick] (F) -- (L);
\draw[dashed] (G) -- (N);
\draw[->, very thick] (G) -- (O);
\draw[dashed] (H) -- (1);
\draw[->, very thick] (H) -- (2);
\draw[dashed] (I) -- (4);
\draw[->, very thick] (I) -- (3);
\draw[dashed] (J) -- (6);
\draw[->, very thick] (J) -- (5);
\draw[dashed] (K) -- (7);
\draw[->, very thick] (K) -- (8);
\draw[dashed] (L) -- (9);
\draw[->, very thick] (L) -- (10);
\draw[dashed] (M) -- (11);
\draw[->, very thick] (M) -- (12);
\draw[dashed] (N) -- (14);
\draw[->, very thick] (N) -- (13);
\draw[dashed] (O) -- (15);
\draw[->, very thick] (O) -- (16);

\node[
    draw=none,
    anchor=north west,
    align=left,
] at (-7.0,-4.8) {
The drawn tree is $R_0$ at time $t$.\\
$V(R_1)=\{A,D,E,F,G,1,2,\ldots,16\}$.\\
$V(R_2)=\{A,1,2,\ldots,16\}$.\\
$\Ancestors_0(7)=\{A,B,E,K,7\}$.\\
$\Ancestors_1(7)=\{A,E,7\}$.\\
$\Siblings_0(E)=\{D\}$.\\
$\Siblings_1(E)=\{D,F,G\}$.
};

\node[
    draw=none,
    anchor=north west,
    align=left,
] at (1.0,-5.2) {
$\Descendants_0(E)=\Descendants_1(E)=\{E\}$.\\
$\Descendants^*_1(E)=\Descendants_2(E)=\{5,6,7,8\}$.\\
$\Pref_{0,t}(A)=B$.\\
$\Pref_{1,t}(A)=E$.
};

\end{tikzpicture}
\caption{Illustration of the notation when $k=2$.}
\label{fig:notation}
\end{figure}


Suppose we run $\Greedy$ on an access sequence $\sigma \in [n]^m$ and an arbitrary initial tree $T_0$.  For a time $t\in [m]$ and set $S\subseteq [n]$, let $\ind(t,S)\in\{0,1\}$ be an indicator for whether $\Greedy$ touches \emph{any} element in the set $S$ while accessing $\sigma(t)$ at time $t$.  For each reference tree $R_i$, we introduce two cost measures for a search at time $t$: the \emph{individual cost} $\ICost_i(t)$ and the \emph{group cost} $\GCost_i(t)$.


\begin{align*}
    \ICost_i(t) &= \sum_{y\in \Ancestors_i(\sigma(t))} \sum_{z \in \Siblings_i(y)} \ind(t,\Descendants_k(z)).\\
    \GCost_i(t) &= \sum_{y \in \Ancestors_i(\sigma(t))} \left(\ind\left(t,\bigcup_{z \in \LeftSiblings_i(y)} \Descendants_k(z)\right) + \ind\left(t,\bigcup_{z \in \RightSiblings_i(y)} \Descendants_k(z)\right)\right).
\end{align*}

Let us state these two cost measures in plain English.  For each reference tree $R_i$ and time $t$, we can partition $[n]$ into a singleton set $\{\sigma(t)\}$ and several intervals --- $2^{k-i}(2^{2^i}-1)$ to be exact --- which are the siblings of ancestors of $\sigma(t)$ in $R_i$.  The individual cost measures how many of these intervals are touched when accessing $\sigma(t)$.  In contrast, the group cost coalesces these intervals into at most $2\cdot 2^{k-i}$ 
groups, two for each ancestor, then measures how many such groups were touched by the access.  See \Cref{fig:ICost-GCost}.

\begin{sidewaysfigure}
  \centering
  \includegraphics[width=0.98\textheight]{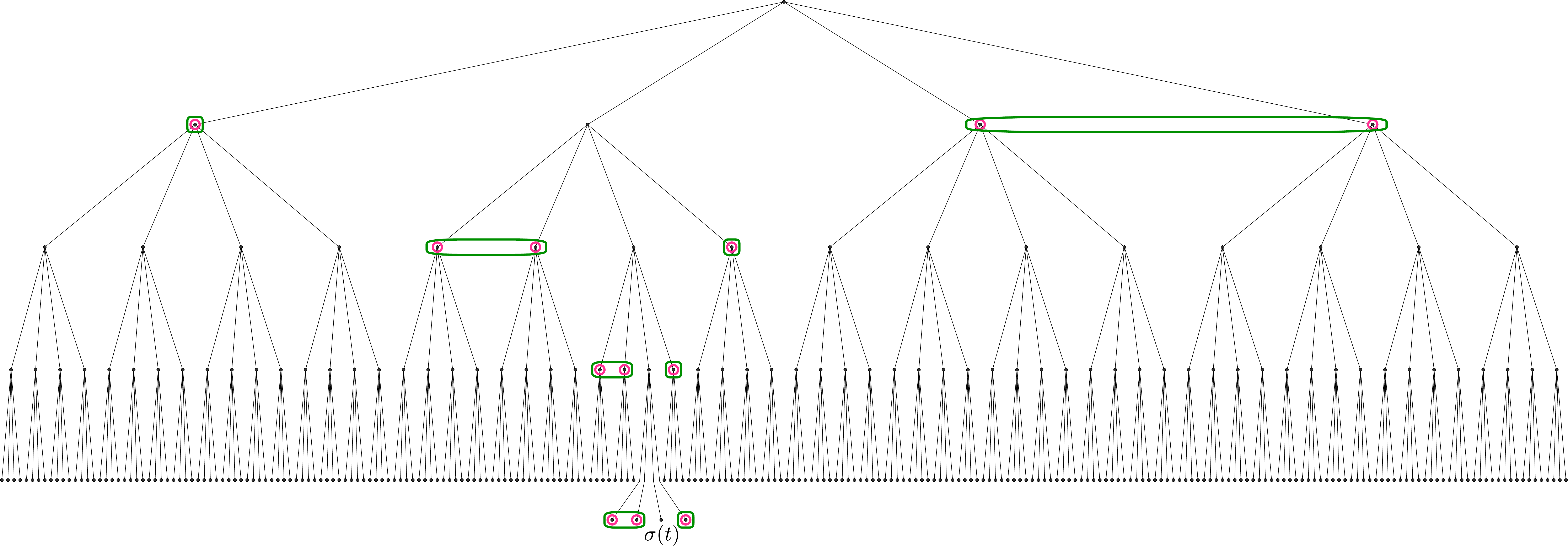}
  \caption{The reference tree $R_1$ with $2^{2^3}$ leaves and branching factor $2^{2^1}=4$.  The $\ICost$ of accessing $\sigma(t)$ is the number of intervals touched that correspond to siblings of ancestors of $\sigma(t)$. These are marked with 12 pink circles.  The $\GCost$ measures the same thing, but with left siblings and right siblings of each ancestor grouped into a single interval.  These are indicated by 8 green ovals.}
  \label{fig:ICost-GCost}
\end{sidewaysfigure}

Define $\ICost_i = \sum_{t\in [m]} \ICost_i(t)$ 
and $\GCost_i = \sum_{t\in [m]} \GCost_i(t)$.  The following lemma provides some simple relations on the various auxiliary cost measures and the actual \Greedy{} 
cost $\Cost(\sigma)$.

\begin{lemma}\label{lem:ICost-GCost-properties}
    For any access sequence $\sigma\in [n]^m$, $\ICost$ and $\GCost$ have the following properties.
    \begin{enumerate}
        \item $\GCost_k \leq \GCost_{k-1} \leq \cdots \GCost_0 = \ICost_0 \leq \ICost_1 \leq \cdots \ICost_k$.
        \item $\Cost(\sigma) = \ICost_k + m$.
        \item $\GCost_k \leq 2m$.
    \end{enumerate}
\end{lemma}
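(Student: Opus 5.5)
We establish the three parts separately, in each case comparing the partitions of $[n]\setminus\{\sigma(t)\}$ that the reference trees $R_i$ induce at a single fixed time $t$; summing over $t$ then gives the statements.

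\textbf{Part 1.} Fix $t$. For each $i$, the intervals $\Descendants_k(z)$ with $z\in\Siblings_i(y)$ and $y\in\Ancestors_i(\sigma(t))$ partition $[n]\setminus\{\sigma(t)\}$. Call these the $i$-intervals.

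We first claim that the partition at level $i$ refines the partition at level $i+1$. In $R_{i+1}$, an ancestor $y'$ of $\sigma(t)$ that lies in $V(R_{i+1})$ has its children obtained by collapsing the two levels of $R_i$ below $y'$. So the siblings of $y'$'s child on the path in $R_{i+1}$ are of two kinds: siblings, in $R_i$, of the intermediate $R_i$-ancestor $w$, together with their subtrees; and siblings, in $R_i$, of the $R_i$-child of $w$ on the path. Consequently each $(i+1)$-interval is a union of $i$-intervals.

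For $\ICost$, each $(i+1)$-interval is itself one of the $i$-intervals (the second kind) or a child of one of them (the first kind splits into the children of a sibling of $w$). Hmm, this shows the opposite of what we need, so we reorient: $i$-intervals are coarser in the first kind. More precisely, a sibling $z$ of $w$ in $R_i$ is a single $i$-interval but splits into several $(i+1)$-intervals. Therefore the $(i+1)$-partition refines the $i$-partition. A touched coarse interval contains at least one touched fine interval, so the number of touched intervals can only grow, giving $\ICost_i(t)\le\ICost_{i+1}(t)$.

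For $\GCost$, the groups at level $i$ are the left and right unions for each $R_i$-ancestor. At level $i+1$, the left group of an $R_{i+1}$-ancestor is the union of the level-$i$ left groups of the two $R_i$-ancestors it absorbs (these are contiguous intervals on the same side of $\sigma(t)$). The groups are therefore merged pairwise, and touched merged groups number at most the touched originals, giving $\GCost_{i+1}(t)\le\GCost_i(t)$.

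At $i=0$ the tree is binary, so each ancestor has one sibling, on one side. Each $\GCost_0$ term therefore equals the single $\ICost_0$ indicator, and $\GCost_0=\ICost_0$.

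\textbf{Part 2.} In $R_k$ every ancestor other than the root's children structure collapses: $R_k$ has depth $2^0=1$, so it is a star whose children are the leaves. The $k$-intervals are therefore the singletons $\{z\}$, $z\ne\sigma(t)$. Thus $\ICost_k(t)$ counts the touched elements other than $\sigma(t)$. Greedy always touches $\sigma(t)$, so summing over $t$ gives $\Cost(\sigma)=\ICost_k+m$. If the initial-tree points $P_{T_0}$ are counted in $\Cost$, this identity needs care; I would adopt the convention that $\Cost$ counts only rows $t\in[m]$, noting that the paper treats $T_0$ loosely.

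\textbf{Part 3.} $R_k$ has depth 1, so $\sigma(t)$ has a single ancestor contributing, namely the root; the leaf $\sigma(t)$ itself has no siblings except those counted under the root. Hence $\GCost_k(t)\le2$, and summing gives $\GCost_k\le 2m$.

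The main obstacle is getting the refinement directions right in Part 1 and checking that merged groups are genuinely unions of groups on the same side. I will verify this by explicitly describing the collapsed two-level structure.
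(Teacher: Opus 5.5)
Your proposal is correct and follows the paper's proof: the paper calls both chains in part 1 immediate from the definitions, and you make that explicit. You show that the $(i+1)$-partition refines the $i$-partition, which gives $\ICost_i\le\ICost_{i+1}$. You also show that each level-$(i+1)$ left or right group is the union of the corresponding groups of the two $R_i$-ancestors it absorbs, which gives $\GCost_{i+1}\le\GCost_i$. Parts 2 and 3 match the paper's star argument for $R_k$, and your remark about the $n$ points of $P_{T_0}$ is a fair caveat that the paper glosses over.

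In a final write-up, two things need fixing. In Part 1, delete the initial wrong-direction claim: the $(i+1)$-intervals of the first kind are the children of the $R_i$-siblings of $w$, not those siblings themselves. In Part 3, the nonzero term comes from $y=\sigma(t)$, whose siblings are all the other leaves; the root contributes nothing because it has no siblings.
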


\begin{proof}
The inequality chains $\GCost_k \leq \cdots \leq \GCost_0$  and $\ICost_0 \leq \cdots \leq \ICost_k$ follow immediately from the definitions of $\GCost$ and $\ICost$.  We have that $\GCost_0 = \ICost_0$ because $R_0$ is binary and hence $\Siblings(y)$ is a singleton set for each ancestor $y$ of $\sigma(t)$.  This proves claim (1).

Since $R_k$ is a star with $n$ leaves, $\ICost_k(t)$ measures the total
number of elements touched when accessing $\sigma(t)$, 
excluding $\sigma(t)$ itself, which is always touched.  Hence, $\Cost(\sigma)$ is exactly $m$ more than $\ICost_k$.  Similarly, $\GCost_k(t) \leq 2m$ since only two groups are considered at each time $t$, namely $\LeftSiblings_k(\sigma(t))$ and $\RightSiblings_k(\sigma(t))$.
\end{proof}

\subsection{Visible Sets}

For $x\in [n]$, denote by $h_t(x)$ the height ($y$-coordinate) of the element $x$ in the geometric view of $\Greedy$ after accessing $\sigma(t)$, that is,
the maximum $i$ for which $(x,i)$ is in the point-set.
For any interval $I\subseteq [n]$, denote $h_t(I):=\max_{x\in I}(h_t(x))$.

Fix an interval $I\subseteq[n]$ 
and let $S=\{I_1,\ldots,I_p\}$ be a partition of $I$ 
into disjoint intervals,
sorted in increasing order.

\begin{definition}
    The \emph{visible set} of $S=\{I_1,\ldots,I_p\}$ at time $t$, denoted $\Visible_t(S)$, is a subset of $S$ containing every interval that has a \emph{strict} prefix maximal or suffix maximal $h_t(\cdot)$ value.  Formally, $I_j\in \Visible_t(S)$ if $h_t(I_k)<h_t(I_j)$ holds for all $k<j$ or for all $k>j$.
\end{definition}

\Cref{prop:hull} shows that accessing an element in $I$ increases the visibility set by a constant, and accessing an element outside $I$ can only reduce the size of the visibility set.

\begin{lemma} \label{prop:hull}
    Let $S=\{I_1,\ldots,I_p\}$ be a partition of an interval $I\subseteq[n]$ into disjoint intervals.

    \begin{enumerate}
        \item If $\sigma(t)\in I$, then $|\Visible_t(S)\setminus \Visible_{t-1}(S)|\le 2$.
        \item If $\sigma(t)\notin I$, then $\Visible_t(S)\subseteq \Visible_{t-1}(S)$. If $r$ intervals in $S$ are touched in this access, we have $|\Visible_{t-1}(S)\setminus \Visible_{t}(S)|\ge r-2$.
    \end{enumerate}
\end{lemma}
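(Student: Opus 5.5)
The plan is to use the standard staircase characterization of which elements $\Greedy$ touches. Write $s=\sigma(t)$ and $h=h_{t-1}$. For $x<s$, the rectangle with corners $(x,h(x))$ and $(s,t)$ contains no other point exactly when $h(x)>h(y)$ for every $y$ with $x<y\le s$, including $y=s$ itself. So the touched elements left of $s$ are the strict suffix maxima of $h$ on $[1,s]$, and symmetrically on the right. Every touched element, and $s$ itself, gets height $t$, which exceeds every $h_{t-1}$ value. An interval of $S$ therefore changes its $h$-value exactly when it contains a touched element, and then its new value is $t$. Both parts follow by locating the intervals whose value becomes $t$ and checking how visibility (strict prefix or suffix maximality) changes elsewhere.

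\textbf{Part 1} ($s\in I$). Let $J_L$ and $J_R$ be the leftmost and rightmost intervals of $S$ with $h_t=t$; these exist because the interval containing $s$ has value $t$.
\begin{itemize}
\item An interval strictly left of $J_L$ keeps its value and has no changed intervals to its left, so its prefix-maximality is unchanged. It cannot be a strict suffix maximum at time $t$, since $J_L$ lies to its right with the global maximum.
\item Symmetrically, an interval strictly right of $J_R$ keeps its suffix status and cannot be a strict prefix maximum.
\item An interval strictly between $J_L$ and $J_R$ has value $t$ on both sides, so it is neither a strict prefix nor a strict suffix maximum.
\end{itemize}
Hence $\Visible_t(S)\setminus\Visible_{t-1}(S)\subseteq\{J_L,J_R\}$, which gives the bound of $2$.

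\textbf{Part 2} ($s\notin I$). Assume without loss of generality that $s$ lies to the right of $I$; the other case is symmetric. If no interval of $S$ is touched, nothing changes and the claim is trivial. Otherwise, the key observation is that every touched interval $I_j$ was a strict suffix maximum at time $t-1$. Indeed, $I_j$ contains some $x$ with $h(x)>h(y)$ for all $y\in(x,s]$, and $(x,s]$ contains every interval right of $I_j$. Now let $J_L$ and $J_R$ be the leftmost and rightmost touched intervals.
\begin{itemize}
\item Intervals strictly left of $J_L$ keep their prefix status and are blocked from being suffix maxima by $J_L$.
\item Intervals strictly right of $J_R$ keep their suffix status (nothing to their right changed) and are blocked from being prefix maxima.
\item Intervals strictly between $J_L$ and $J_R$ are blocked on both sides.
\item $J_L$ and $J_R$ were already visible at time $t-1$, being touched.
\end{itemize}
This proves $\Visible_t(S)\subseteq\Visible_{t-1}(S)$. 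Moreover, the $r$ touched intervals were all visible at time $t-1$. Those other than $J_L$ and $J_R$ lie strictly between $J_L$ and $J_R$, so they are invisible at time $t$. This gives $|\Visible_{t-1}(S)\setminus\Visible_t(S)|\ge r-2$.

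The main point needing care is stating the touch criterion precisely, in particular that the comparison includes $s$'s own column and uses strict inequality. This is what makes "touched $\Rightarrow$ strictly suffix-visible" correct in Part 2 and lets ties be handled consistently with the strict definition of visibility. The rest is bookkeeping over the three regions above.
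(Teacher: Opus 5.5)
Your proof is correct and takes the same approach as the paper. All touched intervals acquire the common height $t$, so only the two extreme ones can newly become visible, and when $\sigma(t)\notin I$ every touched interval was already a strict prefix or suffix maximum. You additionally make explicit two steps the paper leaves implicit: the empty-rectangle touch criterion, and the region-by-region check that untouched intervals cannot newly become visible.
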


\begin{proof}

    Let $\{I_{j_1}, \dots, I_{j_r}\} \subseteq S$ be the set of intervals touched by the access at time $t$, where $j_1 < j_2 < \dots < j_r$. Since $h_t(I_{j_1})=h_t(I_{j_2})=\cdots=h_t(I_{j_r})$, we know that $I_{j_2},\ldots,I_{j_{r-1}}$ are not in the visible set. Thus, only $I_{j_1}$ and $I_{j_r}$ can become visible after accessing $\sigma(t)$, which means $|\Visible_t(S)\setminus \Visible_{t-1}(S)|\le 2$.

    If $\sigma(t)\notin I$, only visible intervals can be touched. Specifically, only prefix maxima can be touched if $\sigma(t)$ is smaller than elements in $I$; only suffix maxima can be touched if $\sigma(t)$ is larger than elements in $I$.  In other words, $\{I_{j_1},\ldots,I_{j_r}\}\subseteq\Visible_{t-1}(S)$. Thus, no invisible intervals can become visible after this access. Moreover, originally visible intervals $I_{j_2},\ldots,I_{j_{r-1}}$ become invisible after this access, so we have $|\Visible_{t-1}(S)\setminus \Visible_{t}(S)|\ge r-2$.
\end{proof}

\medskip 
The examples of the two cases of \Cref{prop:hull} 
are illustrated in \Cref{fig:prop-hull-1,fig:prop-hull-2}.


\begin{figure}[ht]
    \centering

    \begin{subfigure}[t]{0.45\textwidth}
        \centering

        \scalebox{0.9}{$
        \begin{NiceMatrix}
         & & & & {\circ} & \bullet & & \bullet & {\circ} & \\
         & & & {\circ} & & & & & & \\
         & {\circ} & & & & & \bullet & & & {\circ} \\
         {\circ} & & \bullet & & & & & & & \\
         \noalign{\vskip 5mm}
         I_1 & I_2 & I_3 & I_4 & I_5 & I_6 & I_7 & I_8 & I_9 & I_{10}
        \end{NiceMatrix}
        $}

        \vspace{1.5em}

        \scalebox{0.9}{$
        \begin{NiceMatrix}
         & & & & & {\circ} & \times & {\circ} & & \\
         & & & & {\circ} & & & & {\circ} & \\
         & & & {\circ} & & & & & & \\
         & {\circ} & & & & & & & & {\circ} \\
         {\circ} & & \bullet & & & & & & & \\
         \noalign{\vskip 5mm}
         I_1 & I_2 & I_3 & I_4 & I_5 & I_6 & I_7 & I_8 & I_9 & I_{10}
        \end{NiceMatrix}
        $}

        \caption{Here, $\sigma(t)\in I_7$.  Only $I_6,I_8$ becomes visible after this access.}
        \label{fig:prop-hull-1}
    \end{subfigure}
    \hfill
    \begin{subfigure}[t]{0.45\textwidth}
        \centering
        \scalebox{0.9}{$
        \begin{NiceMatrix}
         & & & & & {\circ} & \bullet & & \bullet & {\circ} & \\
         & & & & {\circ} & & & & & & \\
         & & {\circ} & & & & & \bullet & & & {\circ} \\
         \bullet & {\circ} & & \bullet & & & & & & & \\
         \noalign{\vskip 5mm}
         \sigma(t) & I_1 & I_2 & I_3 & I_4 & I_5 & I_6 & I_7 & I_8 & I_9 & I_{10}
        \end{NiceMatrix}
        $}

        \vspace{1.5em}

        \scalebox{0.9}{$
        \begin{NiceMatrix}
         \times & & {\circ} & & \bullet & {\circ} & & & & & \\
         & & & & & & \bullet & & \bullet & {\circ} & \\
         & & & & & & & & & & \\
         & & & & & & & \bullet & & & {\circ} \\
         & {\circ} & & \bullet & & & & & & & \\
         \noalign{\vskip 5mm}
         \sigma(t) & I_1 & I_2 & I_3 & I_4 & I_5 & I_6 & I_7 & I_8 & I_9 & I_{10}
        \end{NiceMatrix}
        $}

        \caption{Here, $\sigma(t)\notin I$.  No invisible intervals are touched, and $I_4$ becomes invisible after this access.}
        \label{fig:prop-hull-2}
    \end{subfigure}

    \caption{The geometric view before and after some $\sigma(t)$ is accessed.  We compact each interval $I_j$ and only keep the highest point within each interval.  Intervals in the visible set are shown as hollow bullets.}
    \label{fig:visible-sets}
\end{figure}



\section{Competitive Analysis of \Greedy}\label{sect:competitive-analysis-of-greedy}

Recall from \Cref{lem:ICost-GCost-properties} that the 
cost of handling $\sigma\in[n]^m$ with \Greedy{} is equal to $\ICost_k+m$.  
We will bound $\ICost_k$ with respect to the interleave bound $\IB=\IB(\sigma, R_0)$.

\begin{theorem}\label{thm:main}
    $\ICost_k\le 2^{O\left(\sqrt {\log\log n}\right)}\cdot(\IB(\sigma,R_0)+n+m)$.
\end{theorem}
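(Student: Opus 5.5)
The plan is a scaling argument built on two lemmas that relate the cost measures of \Cref{lem:ICost-GCost-properties} at adjacent scales. The chain there gives $\GCost_0=\ICost_0$, the \ICost{} hierarchy increases toward $\ICost_k$, and the \GCost{} hierarchy decreases toward $\GCost_k\le 2m$. The two lemmas are:
\begin{itemize}
\item[(A)] A ``refinement'' lemma: $\ICost_{i}\le \ICost_{i-1}+O(\GCost_{i}) + O(\text{correction})$ or a multiplicative variant.
\item[(B)] A ``base'' lemma: $\GCost_{i}$ is bounded in terms of $\GCost_{i+1}$, the interleave bound, and $n$.
\end{itemize}
Each lemma loses a constant factor per level. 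We then group the $k=p^2$ levels into $p$ blocks of $p$ levels each, so that the total loss is $2^{O(p)}=2^{O(\sqrt{\log\log n})}$ rather than $2^{O(k)}=\mathrm{polylog}\, n$.

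\textbf{Step 1 (\GCost{} via visibility and \IB).} Fix an ancestor $y$ of $\sigma(t)$ in $R_i$ and its parent $a$. Apply \Cref{prop:hull} to the partition of $\Descendants_k(a)$ into the intervals of the children of $a$ in $R_i$. When the access lies inside $a$'s subtree, at most $2$ intervals become visible. When an access lies outside and touches $r$ intervals, at least $r-2$ of them disappear from the visible set. Summing potential changes, the number of touched sibling intervals beyond $2$ per access is charged to the number of accesses entering $a$'s subtree.

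The refinement comes from a more careful count. A touch of a sibling interval of $y$ while accessing inside $y$ is charged either to $\GCost$ (the $\le 2$ group touches) or to the visible-set potential. That potential is replenished only when the preferred child of $a$ changes, which is an interleave event in $R_0$ along the $O(2^{i})$ binary levels between $a$ and $y$. Accesses entering $a$ without switching its preferred child, i.e.\ repeated accesses to the same child, can only make the previously visible structure ``stale'' by $O(1)$. This yields roughly
\[
\ICost_i \le O(\GCost_i) + O(\IB(\sigma,R_0)\cdot 2^{i}) + O(n).
\]
The $2^i$ factor is too costly, so we instead compare levels $i$ and $i+p$ (or adjacent levels) and only pay a constant per refinement.

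\textbf{Step 2 (telescoping across scales).} Write $\ICost_{j}\le C\,\ICost_{j-1}+C\,\GCost_{j}$-type inequalities together with $\GCost_{j}\le C\,\GCost_{j+1}+C(\IB+n+m)$. Iterating the second inequality from $\GCost_k\le 2m$ downward over a block of $p$ levels gives $\GCost_{i}\le C^{p}(\IB+n+m)$ at block boundaries. Inside each block we use the first inequality, which also costs $C^{p}$. The induction then proceeds on blocks, with the base case $\ICost_0=\GCost_0$. Choosing block size $p$ balances the $C^{p}$ blow-up per block against the $k/p=p$ number of blocks, so errors add (not multiply) across blocks.

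\textbf{Main obstacle.} The hardest step is obtaining a constant-factor (not $2^{2^i}$ or $2^i$) relation between $\ICost_i$ and the coarser group cost. The issue is that a single group touched at level $i+1$ may split into many individual intervals at level $i$. I would first try to charge each extra touched interval to an invisibility event via the second part of \Cref{prop:hull}: each access outside the node destroys $r-2$ visible intervals. Visible intervals are created only $O(1)$ at a time per access inside the node, and those accesses are exactly the ones counted by the coarser cost or by preferred-child switches. Making this amortization nest correctly across the two scales, especially when the block length $p$ is used rather than a single level, is where I expect the real work to lie.
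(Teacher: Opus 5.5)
There is a genuine gap, and it sits in the scaling arithmetic itself, not only in the lemmas you leave unproven.

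Your inequalities (A) and (B) relate \emph{adjacent} levels with a leading coefficient $C\ge 2$. Chaining them across all $k$ levels costs $C^{k}=(\log n)^{\Theta(1)}$, however you group the levels. A block of $p$ adjacent steps gives $\GCost_{bp}\le C^{p}\,\GCost_{(b+1)p}+\cdots$, and chaining $p$ such blocks multiplies these factors to $C^{p^2}=C^{k}$. So errors do not ``add across blocks.'' Two smaller points reinforce this. Iterating from $\GCost_k\le 2m$ over one block only reaches $\GCost_{k-p}$, not $\GCost_0$. And already for $C=2$ the leading term is $2^{k}\cdot 2m=2m\log_2 n$, which is the trivial bound.

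What is actually needed is a pair of \emph{two-scale} inequalities, valid for arbitrary $i<j$. Their leading coefficient must be an absolute constant independent of the gap $j-i$, with all gap dependence pushed into an additive term that multiplies only quantities already under control. That is exactly the shape of \Cref{lem:v,lem:w}:
$\GCost_i\le 2\GCost_j+2^{j-i}(2\IB(\sigma,R_j)+4n)$ and $\ICost_j\le 2\ICost_i+2^{j-i}(2\IB(\sigma,R_i)+2n+2\GCost_j)$.
Apply each with $j-i=p$ in $p$ jumps; this costs $2^{p}$ multiplicatively and $2^{O(p)}$ additively. First, $\GCost_0\le 2^{p}\GCost_k+2^{2p}(2\IB+4n)$, so $\ICost_0=\GCost_0=O(2^{2p})(\IB+n+m)$. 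Then, bounding $\GCost_{(i+1)p}\le\GCost_0$ inside the additive term, $\ICost_k\le 2^{p}\ICost_0+O(2^{4p})(\IB+n+m)$.

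The second gap is that your Step 1 potential cannot produce such gap-independent lemmas. It counts the visible children of a single parent $a$ in $R_i$, and as you observe it only yields a $2^{i}\IB$ term. The paper's potentials span the entire gap between scales, and the constant $2$ is just the ``$r-2$'' of \Cref{prop:hull}.

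For \Cref{lem:v}, fix each non-leaf $y\in R_j$ and take the $R_i$-path $y=y_0\to\cdots\to y_{2^{j-i}}=\Pref_{j,t}(y)$. At each level, merge all left siblings of $y_q$ into one interval and all right siblings into another, forming $\LeftIntervals_t(y)$ and $\RightIntervals_t(y)$. The potential counts the visible members of each family.
\begin{itemize}
\item A change of $\Pref_{j,t}(y)$ raises it by at most $2^{j-i+1}$, which is charged to $\IB(\sigma,R_j)$.
\item Otherwise $\sigma(t)$ lies outside both families. Touching $q^L(y)$ and $q^R(y)$ intervals then lowers the potential by at least $(q^L(y)-2)+(q^R(y)-2)$.
\item Summing over $y$ gives a drop of at least $\GCost_i(t)-2\GCost_j(t)$.
\end{itemize}

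For \Cref{lem:w}, for each $z\in R_i\setminus R_j$ count the visible intervals among $\{\Descendants_k(z') : z'\in\Descendants_j(z)\}$, excluding $\Descendants_k(\Pref_{j,t}(z))$.
\begin{itemize}
\item Increases come from preferred-child switches in $R_i$, at most $2^{j-i}$ each.
\item Further increases come from accesses below $z$ that touch a non-preferred $R_j$-sibling, at most $2^{j-i+1}\GCost_j(t)$ in total.
\item At siblings of ancestors, the drop is at least $\ICost_j(t)-2\ICost_i(t)$, minus the number of preferred intervals newly made invisible; the latter is charged to $\IB(\sigma,R_i)$.
\end{itemize}
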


As $\IB(\sigma,R_0)$ is a lower bound on $\OPT(\sigma)$, \Cref{thm:main} implies that \Greedy{} is 
$2^{O\left(\sqrt{\log\log n}\right)}$-competitive 
on sequences of length $m=\Omega(n)$, proving
\Cref{thm:intro-main-theorem}.
The proof of \Cref{thm:main} follows easily from the following two technical lemmas relating $\ICost$s and $\GCost$s.  
Together, the lemmas provide an unbroken chain 
connecting the true cost $\ICost_k$ back to the trivial cost $\GCost_k$ and the interleave bound.

\begin{lemma}\label{lem:v}
    For $0\le i<j\le k$, we have

    $$\GCost_i\le 2\cdot \GCost_j+2^{j-i}\cdot\left(2\cdot\IB(\sigma,R_j)+4n\right).$$
\end{lemma}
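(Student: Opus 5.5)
The plan is to decompose the R_i-ancestors of $\sigma(t)$ into blocks indexed by the R_j-ancestors of $\sigma(t)$, and to charge each block's R_i-group cost in one of three ways: to $\GCost_j$, to an interleave event in $R_j$, or to a potential built from visible sets (\Cref{prop:hull}).

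\textbf{Block decomposition.} Fix a time $t$ and an R_j-ancestor $x$ of $\sigma(t)$ whose R_j-child $c\in\Descendants^*_j(x)$ contains $\sigma(t)$. The R_i-ancestors $y'$ of $\sigma(t)$ strictly below $x$ and at or above $c$ form a chain of at most $2^{j-i}$ nodes. For each such $y'$, the group $\bigcup_{z\in\LeftSiblings_i(y')}\Descendants_k(z)$ is a union of consecutive R_j-children of $x$ lying left of $c$. These groups are disjoint and nested outward as $y'$ moves up. So the left contribution of the block to $\GCost_i(t)$ is at most $\min(2^{j-i}, r_L)$, where $r_L$ is the number of R_j-children of $x$ left of $c$ that \Greedy{} touches at time $t$. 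The same holds on the right with $r_R$. If $r_L\ge 1$, the R_j-left-group of $c$ is touched, so one unit is charged to $\GCost_j(t)$; the same holds for the right. Summing over blocks, the ``first'' touched group on each side costs at most $\GCost_j$. I expect to lose the factor $2$ on $\GCost_j$ because boundary groups can be shared between adjacent blocks.

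\textbf{Case 1: the preferred child of $x$ changes at time $t$.} Here $\Pref_{j,t-1}(x)\ne c$, so this is one unit of $\IB(\sigma,R_j)$. I pay the whole block cost $2\cdot 2^{j-i}$ directly. Over all $t$ and $x$ this gives the $2^{j-i}\cdot 2\cdot\IB(\sigma,R_j)$ term.

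\textbf{Case 2: the preferred child is unchanged.} Let $t'<t$ be the last access in the subtree of $x$; it lies in $c$. Let $S_x=\Descendants^*_j(x)$ be viewed as a partition of the interval $\Descendants_k(x)$, and use the potential $\Phi_t=\sum_x|\Visible_t(S_x)|$.
\begin{itemize}
\item During $(t',t)$, every access avoids $\Descendants_k(x)$. By part 2 of \Cref{prop:hull}, $\Visible(S_x)$ only shrinks in this window.
\item At time $t$, the left children touched are strict suffix maxima (going leftward from $c$) of $h_{t-1}$. I will argue they are visible in $S_x$: $c$ was touched at $t'$, and nothing left of $c$ could have become newly visible since.
\item After the access, all touched children share height $t$, so all but the extreme ones leave $\Visible(S_x)$. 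The potential therefore drops by at least $r_L+r_R-O(1)$, which pays for the non-first touched groups in both sides.
\end{itemize}

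\textbf{Potential increases and initial value.} By part 1 of \Cref{prop:hull}, $\Phi$ increases by at most $2$ per R_j-ancestor per access. In Case 2 I want to show there is no net increase at $x$, since $c$ already dominated. Then increases occur only at Case-1 events, which are again charged to $\IB(\sigma,R_j)$, or at the first access into each subtree, which is absorbed into the $4n$ term together with $\Phi_0$. Weighting each potential unit by $2^{j-i}$ is crude but gives the stated $2^{j-i}\cdot 4n$ slack.

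\textbf{Main obstacle.} The delicate step is Case 2: showing that every touched child left or right of $c$ was actually in $\Visible_{t-1}(S_x)$, and that the access inside $c$ does not create new visible intervals beyond a constant that is already paid for. If this fails, my fallback is to restrict the visible set to the left part (respectively right part) of $S_x$ relative to $c$. Then part 2 of \Cref{prop:hull} applies verbatim, since $\sigma(t)$ lies outside that interval. That repartition would be reset, at cost at most $2^{j-i}$, only when the preferred child of $x$ changes.
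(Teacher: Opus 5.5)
\textbf{There is a genuine gap: the potential on $S_x$ fails exactly at the step you flagged.} Your block decomposition is correct, and so is the bound $\min(2^{j-i},r_L)$ on a block's left contribution to $\GCost_i(t)$. But the potential built on $S_x=\Descendants^*_j(x)$ does not work. In Case~2, both of your claims are false: that every sibling of $c$ touched at time $t$ lies in $\Visible_{t-1}(S_x)$, and that $|\Visible(S_x)|$ has no net increase.

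Here is a counterexample. Let $c=I_5$ in $S_x=(I_1,\dots,I_N)$. Suppose the access at time $t-1$ lies in $I_5$ and touches exactly $I_2,I_4,I_5,I_7$ among the children of $x$. Let $\sigma(t)=\min I_5$, not touched at $t-1$. This is realized at $t=2$ by an initial tree whose root-to-$\sigma(1)$ path visits $I_2,I_7,I_4,I_5$.
\begin{itemize}
\item At time $t-1$, both $I_4$ and $I_5$ are invisible: $I_4$ is tied at height $t-1$ with $I_2$ and $I_5$, and $I_5$ is tied with $I_4$ and $I_7$.
\item At time $t$ the preferred child of $x$ is unchanged, yet \Greedy{} touches $I_4$, because the rightmost height-$(t-1)$ point of $I_4$ sees $\sigma(t)$ through an empty rectangle. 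On the right, \Greedy{} is stopped inside $I_5$ by $\sigma(t-1)$.
\item Afterwards $I_4$ is a strict prefix maximum and $I_5$ a strict suffix maximum, while $I_2$ and $I_7$ stay visible.
\end{itemize}
So a touched interval was invisible beforehand, and the potential at $x$ rises by $2$ with no change of $x$'s preferred child. Because $S_x$ does not depend on preferred children, your argument has no bound on such increases beyond part~1 of \Cref{prop:hull}. That allows up to $2$ per $R_j$-ancestor per access, which is not controlled by $\IB(\sigma,R_j)$.

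\textbf{Your fallback fails quantitatively.} The $R_j$-children of $x$ to the left of $c$ can number $2^{2^j}-1$. After a preferred-child change all of them can be visible, since strictly decreasing heights make each one a strict suffix maximum. So the reset cost is not bounded by $2^{j-i}$.

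Nothing of that order can hold either. Take $j=k$, $i=k-1$; then the parts are singletons. Your reset bound together with part~2 of \Cref{prop:hull} would give $\ICost_k\le 4m+O(\IB(\sigma,R_k)+n)=O(m+n)$ for every $\sigma$. This contradicts Wilber's $\Omega(n\log n)$ bound for bit reversal.

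\textbf{The missing idea is to coarsen the partition to exactly the groups your blocks count.} Along the $R_i$-path $x=y_0\to y_1\to\cdots\to y_{2^{j-i}}=\Pref_{j,t}(x)$, take the $2^{j-i}$ left groups $\bigcup_{z\in\LeftSiblings_i(y_q)}\Descendants_k(z)$ and the $2^{j-i}$ right groups. These are the paper's $\LeftIntervals_t(x)$ and $\RightIntervals_t(x)$.
\begin{itemize}
\item A preferred-child change then costs at most $2^{j-i+1}$, charged to $\Lambda(\sigma,R_j,t)$.
\item Once preferred children are updated, $\sigma(t)$ lies in $\Pref_{j,t}(x)$ for every $R_j$-ancestor $x$. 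So part~2 of \Cref{prop:hull} applies uniformly.
\item Let $q^L,q^R$ be the numbers of touched groups, i.e.\ $x$'s block contribution to $\GCost_i(t)$. The drop is at least $\max\{0,q^L-2\}+\max\{0,q^R-2\}\ge q^L+q^R-2(\ind[q^L\ge1]+\ind[q^R\ge1])$.
\end{itemize}
This handles your two cases at once. Paying Case~1 directly on top of the reset would double the coefficient of $\IB(\sigma,R_j)$.

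Finally, the factor $2$ on $\GCost_j$ comes from the ``$-2$'' in \Cref{prop:hull}, because the two extreme touched groups may stay visible. It does not come from groups shared between blocks; your blocks are disjoint.
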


\begin{lemma}\label{lem:w}
    For $0\le i<j\le k$, we have

    $$\ICost_j\le 2\cdot \ICost_i+2^{j-i}\cdot\left(2\cdot \IB(\sigma,R_i) +2n+2\cdot \GCost_j\right).$$
\end{lemma}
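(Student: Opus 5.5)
The approach is an amortized charging argument with the visible sets of \Cref{prop:hull} as the potential. It works separately inside each $R_j$-parent. Fix an internal node $p$ of $R_j$. Its children $\Descendants^*_j(p)$ partition the interval $\Descendants_k(p)$, and in $R_i$ they sit $2^{j-i}$ levels below $p$. Now take a time $t$ with $\sigma(t)\in\Descendants_k(p)$, and let $y$ be the child of $p$ containing $\sigma(t)$. Every sibling $z\in\Siblings_j(y)$ lies in the subtree of a unique node $w\in\Siblings_i(u)$, where $u$ ranges over the $R_i$-ancestors of $\sigma(t)$ strictly between $p$ and $y$. For each such $w$ define $S_w$ to be the set of $R_j$-children of $p$ inside $w$; this is a partition of the interval $\Descendants_k(w)$. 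The contribution of $p$ to $\ICost_j(t)$ is then
\[
\sum_{w} r_w(t), \qquad r_w(t)=\#\{\text{intervals of } S_w \text{ touched at time } t\},
\]
while $\ICost_i(t)$ counts $\ind(t,\Descendants_k(w))=[r_w(t)>0]$ for the same $w$'s.

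The next step applies \Cref{prop:hull}(2). Since $\sigma(t)\notin\Descendants_k(w)$, we get $r_w(t)\le 2[r_w(t)>0]+\bigl(|\Visible_{t-1}(S_w)|-|\Visible_t(S_w)|\bigr)$. Summing over all $w$, $p$ and $t$ gives
\[
\ICost_j\le 2\,\ICost_i+\sum_t (\text{total decrease of } \Phi \text{ at time } t),
\qquad \Phi_t=\sum_w |\Visible_t(S_w)|,
\]
where $w$ ranges over all $R_i$-nodes strictly between an $R_j$-node and its $R_j$-children. By part (2), $\Phi$ never increases on a set $S_w$ when the access is outside $w$. So the total decrease is at most $\Phi_0$ plus the total increase. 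The initial term is crude: each leaf-level interval lies in at most $2^{j-i}$ sets $S_w$, giving $\Phi_0\le 2^{j-i}n$. The main task is therefore bounding the increases.

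Increases happen only when $\sigma(t)\in\Descendants_k(w)$. Each access lies in at most $2^{j-i}$ such $w$ per $R_j$-ancestor level, and this is the origin of the factor $2^{j-i}$. I would split the increasing events into two kinds. In the first kind, the access switches the preferred child at $w$ or at one of the $R_i$-nodes on the path from $w$ toward $\sigma(t)$. Each such event adds at most $2$ by \Cref{prop:hull}(1) and is charged to $\IB(\sigma,R_i)$; this gives the term $2^{j-i}\cdot 2\,\IB(\sigma,R_i)$, with the $+n$ absorbing the first-time preferences. In the second kind, there is no preference switch. Then the intervals that newly become visible in $S_w$ must be the extreme touched intervals immediately to the left and right of the $R_j$-child containing $\sigma(t)$. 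Each such newly visible interval witnesses a touch in the left group or the right group of some $R_j$-ancestor of $\sigma(t)$, i.e.\ a unit of $\GCost_j(t)$. Each unit of $\GCost_j$ is shared by at most $2^{j-i}$ of the $w$'s, which yields $2^{j-i}\cdot 2\,\GCost_j$.

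The hard step is this second case: showing rigorously that a newly visible interval in $S_w$ can always be charged to a distinct touched left/right group at the $R_j$ level, and in particular that it is not the interval containing $\sigma(t)$ itself. I would first try to prove that after an access inside $w$, the only possible new visible members of $S_w$ are the leftmost and rightmost touched intervals, as in the proof of \Cref{prop:hull}. If a new member is not the interval containing $\sigma(t)$, then it lies in $\LeftSiblings_j(y)$ or $\RightSiblings_j(y)$ for the appropriate $R_j$-ancestor $y$, and is therefore counted by the corresponding indicator in $\GCost_j(t)$. Collecting the three contributions gives $\ICost_j\le 2\,\ICost_i+2^{j-i}\bigl(2\,\IB(\sigma,R_i)+2n+2\,\GCost_j\bigr)$.
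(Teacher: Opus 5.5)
Your decomposition is the paper's own: your $w$'s are the nodes of $V(R_i)\setminus V(R_j)$, $S_w=\Intervals(w)$, the $w$'s hanging off the access path form $Z(t)$, and \Cref{prop:hull}(2) yields the $2\,\ICost_i$ term. However, the step you flagged as hard is not just unproven; it is false for your potential $\Phi_t=\sum_w|\Visible_t(S_w)|$. The interval $I^*=\Descendants_k(y)$ containing $\sigma(t)$ can itself become newly visible in $S_w$ with no preference switch anywhere and no $R_j$-sibling of $y$ touched.

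Suppose the access at time $t-1$ was into $I^*$ and touched intervals of $S_w$ on both sides of $I^*$. Then $I^*$ ties for the top height $t-1$, so $I^*\notin\Visible_{t-1}(S_w)$. Now let $\sigma(t)=\sigma(t-1)$, which is allowed since $\sigma\in[n]^m$. Greedy touches only $\sigma(t)$, so $I^*$ becomes the unique maximum of $S_w$ and turns visible, while $\Lambda(\sigma,R_i,t)=0$ and $\GCost_j(t)=0$. For permutations, the same effect arises if $\sigma(t)$ lies strictly between two elements of $I^*$ touched at time $t-1$ and in the same $R_i$-child of $y$ as $\sigma(t-1)$. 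Those two points shield everything outside $I^*$, and all switches are strictly below $y$. So $\Phi$ increases with nothing to charge at time $t$, possibly at many $w$'s at once. Charging backwards to the moment $I^*$ turned invisible does not rescue the argument either. That moment can be an access outside $w$, even far outside $w$'s parent, where no unit of $\IB(\sigma,R_i)$ lies within $2^{j-i}$ levels of $w$.

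The missing idea is to exclude the preferred interval from the potential: $\Psi_t(w)=|\Visible_t(S_w)\setminus\{\Descendants_k(\Pref_{j,t}(w))\}|$. With this potential:
\begin{itemize}
\item A preference switch at an $R_i$-node changes $\Pref_{j,t}(w)$ for at most $2^{j-i}$ nodes $w$, each gaining at most $1$.
\item At an ancestor $w$ of $\sigma(t)$, $I^*$ is excluded, so $\Psi(w)$ can rise only when some $R_j$-sibling of $y$ inside $w$ is touched. This gives at most $2^{j-i+1}\GCost_j(t)$ in total.
\item For $w$ not containing $\sigma(t)$, the preferred interval is fixed while the visible set shrinks, so $\Psi(w)$ never rises. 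This is why far-away accesses need no accounting at all.
\end{itemize}
The price appears only for $w\in Z(t)$. There the decrease is at least $q(w)-2\,[q(w)\ge 1]$, minus one if $\Descendants_k(\Pref_{j,t}(w))$ was visible and has just become invisible. These $\InvisiblePref$ events are charged to $\IB(\sigma,R_i)$. Right after such an event, the preferred child of $w$'s parent is not $w$. The excluded interval cannot reappear or change until an access inside $w$, and that access switches the parent's preferred child back to $w$. Telescoping then gives the lemma.

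Two minor slips: $u$ must also range over $y$ itself, and $\Phi_0$ should count all $R_j$-nodes, not just leaves, giving $\Phi_0<2^{j-i}\cdot 2n$.
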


In \Cref{sec:l1,sec:l2} we prove \Cref{lem:v,lem:w}, respectively, and in \Cref{sec:main_thm}, we complete the proof of \Cref{thm:main}.

\subsection{Proof of \Cref{lem:v}}\label{sec:l1}

Fix reference trees $R_i$ and $R_j$ with $i<j$.  
We will use a potential argument to prove the lemma. 
The total increase of potential is upper bounded by a combination of $n$ and the interleave bound, while the total decrease of potential is lower bounded by 
$\GCost_i-2\cdot \GCost_j$.

Fix a time $t$ and a non-leaf node $y\in V(R_j)$.
Let $y=y_0\to y_1\to\cdots\to y_{2^{j-i}}=\Pref_{j,t}(y)$
be the path in $R_i$ from $y$ to its 
preferred child in $R_j$.
The interval sets 
$\LeftIntervals,\RightIntervals$ 
are defined with respect to $y$ and $\Pref_{j,t}(y)$.

$$\begin{aligned}
    \LeftIntervals_t(y) &=\left\{\bigcup_{y'\in\LeftSiblings_i(y_q)}\Descendants_k(y')\;\:\middle|\:\; 1\le q\le 2^{j-i} \right\},\\
    \RightIntervals_t(y) &=\left\{\bigcup_{y'\in\RightSiblings_i(y_q)}\Descendants_k(y') \;\:\middle|\:\; 1\le q\le 2^{j-i} \right\}.
\end{aligned}$$

$\LeftIntervals_t(y)$ and $\RightIntervals_t(y)$ are partitions of the leaf-descendants of $y$ that are 
strictly to the left of $\Pref_{j,t}(y)$ and 
strictly to the right of $\Pref_{j,t}(y)$, respectively.
An example is illustrated in \Cref{fig:lem3-1}.

\begin{figure}[t]
\centering
\begin{tikzpicture}[
    scale=0.85,
    transform shape,
    vertex/.style={circle,draw,minimum size=3.5mm,inner sep=0pt},
    svertex/.style={circle,draw,minimum size=2mm,inner sep=0pt},
    labelnode/.style={draw=none,inner sep=1pt},
    every path/.style={thick}
]

\node[vertex] (y0) at (-1,0) {};
\node[labelnode,above=0.5mm of y0] {$y_0$};

\node[vertex] (y1) at (1.5,-1.0) {};
\node[labelnode,above right=0mm and 0mm of y1] {$y_1$};

\node[vertex] (y2) at (4.0,-2.0) {};
\node[labelnode,right=0.5mm of y2] {$y_2$};

\node[vertex] (yk1) at (4.0,-3.5) {};
\node[labelnode,above right=0mm and 0mm of yk1] {$y_{2^{j-i}-1}$};

\node[vertex] (yk) at (4.0,-4.5) {};
\node[labelnode,below=0.5mm of yk] {$y_{2^{j-i}}$};

\draw (y0)--(y1)--(y2);
\draw[dashed] (y2)--(yk1);
\draw (yk1)--(yk);

\node[vertex] (a01) at (-3.5,-1.0) {};
\node[svertex] (la01) at (-3.8,-6.0) {};
\node[svertex] (ra01) at (-3.2,-6.0) {};
\draw[dashed] (a01)--(la01);
\draw[dashed] (a01)--(ra01);
\node[draw=none] at ($(la01)!0.5!(ra01)$) {\scalebox{0.7}{$\cdots$}};
\node[vertex] (a02) at (-2.5,-1.0) {};
\node[svertex] (la02) at (-2.8,-6.0) {};
\node[svertex] (ra02) at (-2.2,-6.0) {};
\draw[dashed] (a02)--(la02);
\draw[dashed] (a02)--(ra02);
\node[draw=none] at ($(la02)!0.5!(ra02)$) {\scalebox{0.7}{$\cdots$}};
\node[vertex] (a03) at (10.5,-1.0) {};
\node[svertex] (la03) at (10.2,-6.0) {};
\node[svertex] (ra03) at (10.8,-6.0) {};
\draw[dashed] (a03)--(la03);
\draw[dashed] (a03)--(ra03);
\node[draw=none] at ($(la03)!0.5!(ra03)$) {\scalebox{0.7}{$\cdots$}};
\draw (y0)--(a01) (y0)--(a02) (y0)--(a03);
\node[draw=none,left=-0.5mm of la01] {$\bigl[$};
\node[draw=none,right=-0.5mm of ra02] {$\bigr]$};
\node[draw=none,left=-0.5mm of la03] {$\bigl[$};
\node[draw=none,right=-0.5mm of ra03] {$\bigr]$};

\node[vertex] (a11) at (-1,-2.0) {};
\node[svertex] (la11) at (-1.3,-6.0) {};
\node[svertex] (ra11) at (-0.7,-6.0) {};
\draw[dashed] (a11)--(la11);
\draw[dashed] (a11)--(ra11);
\node[draw=none] at ($(la11)!0.5!(ra11)$) {\scalebox{0.7}{$\cdots$}};
\node[vertex] (a12) at (0,-2.0) {};
\node[svertex] (la12) at (-0.3,-6.0) {};
\node[svertex] (ra12) at (0.3,-6.0) {};
\draw[dashed] (a12)--(la12);
\draw[dashed] (a12)--(ra12);
\node[draw=none] at ($(la12)!0.5!(ra12)$) {\scalebox{0.7}{$\cdots$}};
\node[vertex] (a13) at (9,-2.0) {};
\node[svertex] (la13) at (8.7,-6.0) {};
\node[svertex] (ra13) at (9.3,-6.0) {};
\draw[dashed] (a13)--(la13);
\draw[dashed] (a13)--(ra13);
\node[draw=none] at ($(la13)!0.5!(ra13)$) {\scalebox{0.7}{$\cdots$}};
\draw (y1)--(a11) (y1)--(a12) (y1)--(a13);
\node[draw=none,left=-0.5mm of la11] {$\bigl[$};
\node[draw=none,right=-0.5mm of ra12] {$\bigr]$};
\node[draw=none,left=-0.5mm of la13] {$\bigl[$};
\node[draw=none,right=-0.5mm of ra13] {$\bigr]$};

\node[vertex] (b01) at (2.5,-4.5) {};
\node [svertex] (lb01) at (2.2,-6.0) {};
\node [svertex] (rb01) at (2.8,-6.0) {};
\draw[dashed] (b01)--(lb01);
\draw[dashed] (b01)--(rb01);
\node[draw=none] at ($(lb01)!0.5!(rb01)$) {\scalebox{0.7}{$\cdots$}};
\node[vertex] (b02) at (5.5,-4.5) {};
\node [svertex] (lb02) at (5.2,-6.0) {};
\node [svertex] (rb02) at (5.8,-6.0) {};
\draw[dashed] (b02)--(lb02);
\draw[dashed] (b02)--(rb02);
\node[draw=none] at ($(lb02)!0.5!(rb02)$) {\scalebox{0.7}{$\cdots$}};
\node[vertex] (b03) at (6.5,-4.5) {};
\node [svertex] (lb03) at (6.2,-6.0) {};
\node [svertex] (rb03) at (6.8,-6.0) {};
\draw[dashed] (b03)--(lb03);
\draw[dashed] (b03)--(rb03);
\node[draw=none] at ($(lb03)!0.5!(rb03)$) {\scalebox{0.7}{$\cdots$}};
\draw (yk1)--(b01) (yk1)--(b02) (yk1)--(b03);
\node[draw=none,left=-0.5mm of lb01] {$\bigl[$};
\node[draw=none,right=-0.5mm of rb01] {$\bigr]$};
\node[draw=none,left=-0.5mm of lb02] {$\bigl[$};
\node[draw=none,right=-0.5mm of rb03] {$\bigr]$};

\node[draw=none,left=1mm of la01] {$\Bigl\{$};
\node[draw=none] at ($(ra02)!0.5!(la11)$) {$,$};
\node[draw=none] at ($(ra12)!0.5!(lb01)$) {$,\ \cdots\ ,$};
\node[draw=none,right=1mm of rb01] {$\Bigr\}$};
\node[draw=none] at ($(la01)!0.5!(rb01)+(0,-6mm)$) {$=\LeftIntervals_t(y)$};

\node[draw=none,left=1mm of lb02] {$\Bigl\{$};
\node[draw=none] at ($(rb03)!0.5!(la13)$) {$,\ \cdots\ ,$};
\node[draw=none] at ($(ra13)!0.5!(la03)$) {$,$};
\node[draw=none,right=1mm of ra03] {$\Bigr\}$};
\node[draw=none] at ($(lb02)!0.5!(ra03)+(0,-6mm)$) {$=\RightIntervals_t(y)$};

\end{tikzpicture}
\caption{Illustration of the interval families $\LeftIntervals_t(y)$ and $\RightIntervals_t(y)$. $y=y_0$ is the parent of $y_{2^{j-i}}$ in $R_j$, while $y_0\to y_1\to\cdots\to y_{2^{j-i}}$ is a path in $R_i$.}
\label{fig:lem3-1}
\end{figure}

The potential is defined as

$$\begin{aligned}
    \Phi_t(y) &=|\Visible_t(\LeftIntervals_t(y))|+|\Visible_t(\RightIntervals_t(y))|,\\
    \Phi_t &=\sum_{y\in V(R_j),\ y \text{ is not a leaf}}\Phi_t(y).
\end{aligned}$$

We now consider how the potential changes in 
response to an access $\sigma(t)$.  Observe that
from $y$'s point of view, the simplest scenario
is where $\sigma(t)$ is either not in $\Descendants_k(y)$,
or in $\Descendants_k(\Pref_{j,t-1}(y))$, in which case
$\Pref_{j,t}(y)=\Pref_{j,t-1}(y)$ does not change,
and the change to $\Phi_t(y)$ is well behaved, as specified in \Cref{prop:hull}.
If this is not the case, then 
$\Pref_{j,t}(y)\neq\Pref_{j,t-1}(y)$
and $\Phi_t(y)$ may increase by the maximum 
$\Omega(2^{j-i})$.  This increase will be charged
to a unit of the interleave bound $\Lambda(\sigma,R_j,t)$.

We shall analyze $\Phi_t-\Phi_{t-1}$ in two steps, one in which we change preferred children, and one in which we consider the change to the visibility sets.

\begin{enumerate}
    \item First change the preferred children of the ancestors of $\sigma(t)$.

    For $y\in \Ancestors_j(\sigma(t))$, its potential may change only when $\Pref_{j,t}(y)\ne \Pref_{j,t-1}(y)$. This happens for $\Lambda(\sigma,R_j,t)$ different nodes $y$. Since the potential of a single node $y$ is always upper bounded by $2^{j-i+1}$---the maximum cardinality of $\LeftIntervals_t(y)\cup\RightIntervals_t(y)$---the total increase in potential is at most $\Lambda(\sigma,R_j,t)\cdot 2^{j-i+1}$.

    \item Next add several new points in the geometric view and measure the change in visibility sets.

    At this point the path from any reference tree root to $\sigma(t)$ is already preferred. As a consequence, we have $\sigma(t)\notin \bigcup(\LeftIntervals_t(y)\cup \RightIntervals_t(y))$ for any node $y$ in $R_j$. 
    From the second case of \Cref{prop:hull}, 
    the potential of $y$ can only decrease.

    Let $y\in V(R_j)$ be an ancestor of $\sigma(t)$,
    and suppose $q^L(y)$ and $q^R(y)$ are the number of intervals from $\LeftIntervals_t(y)$ and $\RightIntervals_t(y)$ that were touched when 
    accessing $\sigma(t)$. 
    We can rewrite $\GCost_i(t)$ and $\GCost_j(t)$ as follows.

    $$\begin{aligned}
        \GCost_i(t)&=\sum_{y\in\Ancestors_j(\sigma(t))}\left(q^L(y)+q^R(y)\right),\\
        \GCost_j(t)&=\sum_{y\in\Ancestors_j(\sigma(t))}\left(\ind[q^L(y)\ge 1]+\ind[q^R(y)\ge 1]\right).
    \end{aligned}$$

    From the second case of \Cref{prop:hull}, 
    the decrease in $y$'s potential 
    is lower bounded by $\max\left\{0,q^L(y_0)-2\right\}+\max\left\{0,q^R(y_0)-2\right\}$. Thus, the total decrease of potential is at least $\GCost_i(t)-2\cdot \GCost_j(t)$.
\end{enumerate}

Now we have

$$\Phi_t\le\Phi_{t-1}+2^{j-i+1}\cdot \Lambda(\sigma, R_j, t)-\left(\GCost_i(t)-2\cdot \GCost_j(t)\right).$$

By telescoping over $t$ and rearranging, we have

$$\GCost_i\le 2\cdot \GCost_j+2^{j-i+1}\cdot\IB(\sigma,R_j)+\left(\Phi_0-\Phi_m\right).$$

Finally, we observe that $\Phi_0\le 2^{j-i+1}\cdot|R_j|\le 2^{j-i+2}\cdot n$.  This completes the proof of \Cref{lem:v}.

\subsection{Proof of \Cref{lem:w}}\label{sec:l2}

Fix reference trees $R_i$ and $R_j$ with $i<j$.  We will again use a potential argument to prove the lemma, similar to the one used in \Cref{sec:l1}.

For a node $z$ in $R_i$ but not in $R_j$, let

$$\Intervals(z) = \left\{\Descendants_k(z')\mid z'\in\Descendants_j(z)\right\}$$

be the set of intervals represented by nodes in $\Descendants_j(z)$.

We impose a potential $\Psi_t$ defined as

$$\begin{aligned}
    \Psi_t(z)&:=|\Visible_t(\Intervals(z))\setminus \{\Descendants_k(\Pref_{j,t}(z))\}|,\\
    \Psi_t&:=\sum_{z\in V(R_i)\setminus V(R_j)}\Psi_t(z),
\end{aligned}$$

We now analyze the change in potential from time $t-1$ to $t$, using the same steps from \Cref{lem:v}.  First we change the preferred children to be consistent with the access $\sigma(t)$, then account for the change in visibility caused by adding points to arboreally satisfy $\sigma(t)$.

\begin{enumerate}
    \item First change the preferred children of the ancestors of $\sigma(t)$.

    Suppose the preferred child of some node $z_0$ changes. Find the nearest ancestor of $z_0$ that is an $R_j$-node, and let it be $z_1$.  This switch will make $\Pref_{j,t}(z)\ne\Pref_{j,t-1}(z)$ for all $z$ on the path from $z_1$ to $z_0$. There are as many as 
    $2^{j-i}$ such nodes and the potential of each such node increases by at most $1$. 
    Thus, the total increase in potential is upper bounded by $\Lambda(\sigma,R_i,t)\cdot 2^{j-i}$.

    \item Second, add new points to the geometric view.

    We categorize the nodes $z\in R_i$ into three types 
    and bound the change in their potential separately.

    \begin{description}
        \item[Case: $z$ is an ancestor of $\sigma(t)$.]

        From the first case of \Cref{prop:hull}, we know the number of visible intervals in $\Intervals(z)$ increases by at most $2$. Since we exclude $\Pref_{j,t}(z)$ (which is also an ancestor of $\sigma(t)$) from the intervals counted by $\Psi_t(z)$, the potential can only increase when some interval in $\Intervals(z)$ other than $\Descendants_k(\Pref_{j,t}(z))$ is touched. 
        We say $z$ is \emph{special} in this case.

\medskip

Let $y_0\to y_1\to\cdots\to y_{2^{j-i}}$ be a preferred path in $R_i$ between two $R_j$ nodes $y_0,y_{2^{j-i}}$ that are ancestors of $\sigma(t)$, i.e. $y_0,y_{2^{j-i}}\in\Ancestors_j(\sigma(t))$. For any $z=y_q$, $1\le q\le 2^{j-i}$, we have $\Pref_{j,t}(z)=y_{2^{j-i}}$. Thus, $z$ is special  when some sibling of $y_{2^{j-i}}$ in $R_j$ is touched, which is precisely what is tallied by $\GCost_j(t)$.

\medskip 

        Thus, the number of special nodes associated with $\sigma(t)$ is bounded by $2^{j-i}\cdot\GCost_j(t)$, and the total increase in potential for all such nodes is at most $2^{j-i+1}\cdot\GCost_j(t)$.

        \item[Case: The parent of $z$, but not $z$, is an ancestor of $\sigma(t)$.]
        Define 
        \[
Z(t) =\bigcup_{z^*\in\Ancestors_i(\sigma(t))}\Siblings_i(z^*)
\]
to be the set of all such nodes.        
        From the second case of \Cref{prop:hull} we know the potential of $z$ does not increase. Supposing that $q(z)$ intervals in $\Intervals(z)$ are touched, we can rewrite $\ICost_i(t)$ and $\ICost_j(t)$ as
    
        $$\begin{aligned}
            \ICost_i(t)&=\sum_{z\in Z(t)}\ind[q(z)\ge 1],\\
            \ICost_j(t)&=\sum_{z\in Z(t)}q(z).
        \end{aligned}$$
    
        The decrease of $|\Visible(\Intervals(z))|$ is lower bounded by $\max\{0,q(z)-2\}$. Thus, the decrease of the potential of $z$ is lower bounded by $\max\{0,q(z)-2\}$ if $\Descendants_k(\Pref_{j,t}(z))$ does not become invisible, or $\max\{0,q(z)-3\}$ if $\Descendants_k(\Pref_{j,t}(z))$ becomes invisible.\footnote{Note that this argument holds for $z\in V(R_j)$ as well, although it does not have a potential, since 
        $q(z)\le|\Intervals(z)|=1$ in this case and thus $\max\{0,q(z)-2\}=0$.}

\medskip

        Once $\Descendants_k(\Pref_{j,t}(z))$ becomes invisible, it cannot become visible again until some element in the subtree rooted at $z$ is accessed, which means the preferred child of the parent of $z$ changes to $z$. Thus, denote $\InvisiblePref(t)$ as the number of $z\in Z(t)$ such that $\Descendants_k(\Pref_{j,t}(z))$ is visible before this access and becomes invisible after this access. We have $\sum_t\InvisiblePref(t)\le\IB(\sigma,R_i)$.

        \medskip 
        
        The total decrease in potential for all nodes in $Z(t)$ is at least $\ICost_j(t)-2\cdot \ICost_i(t)-\InvisiblePref(t)$
    
        \item[Case: Neither $z$ nor the parent of $z$ is an ancestor of $\sigma(t)$.]
    
        From the second case of \Cref{prop:hull} we know the potential of $z$ does not increase.
    \end{description}
\end{enumerate}

Summing over all the potential changes above, we obtain

$$\Psi_t\le\Psi_{t-1}+2^{j-i}\cdot \Lambda(\sigma,R_i,t)+2^{j-i+1}\cdot \GCost_j(t)-(\ICost_j(t)-2\cdot \ICost_i(t)-\InvisiblePref(t)).$$

By telescoping over $t$ and rearranging, we have

$$\ICost_j\le 2\cdot \ICost_i+\IB(\sigma,R_i)+2^{j-i}\cdot\left(\IB(\sigma, R_i)+2\cdot \GCost_j\right)+(\Psi_0-\Psi_m).$$

Each node in $R_j$ occurs in $\Intervals(z)$ for exactly those nodes $z\in R_i$ which are ancestors at distance less than $2^{j-i}$.  Since $R_j$ has less than $2n$ nodes, $\Psi_0 < 2^{j-i+1}\cdot n$. This completes the proof of \Cref{lem:w}.

\subsection{Proof of \Cref{thm:main}}\label{sec:main_thm}

Recall that we have conveniently chosen $k$ to be a 
square and defined $p=\sqrt{k}$.  First consider the cost measures $\GCost_{ip}$ for $0\leq i\leq p$.  By \Cref{lem:v} and the fact that $\IB(\sigma,R_{(i+1)p})\le\IB(\sigma,R_0)$,
\begin{align*}
\GCost_{ip} &\le 2\cdot \GCost_{(i+1)p}+2^{p}\cdot(2\cdot \IB(\sigma, R_0)+4n),\\
\intertext{and by combining these inequalities for all $i \in [0,p-1]$, we have}
\GCost_0 &\le 2^p\cdot \GCost_k+2^{2p}\cdot(2\cdot \IB(\sigma,R_0)+4n).\\
\intertext{Since $\GCost_k\le 2m$, we have}
\ICost_0 = \GCost_0 &\leq 
            O(2^{2p})\cdot(\IB(\sigma,R_0)+n+m).\\
\intertext{Now, consider the cost measures $\ICost_{ip}$ for $0\leq i\leq p$.  By \Cref{lem:w} and $\IB(\sigma,R_{ip})\le\IB(\sigma,R_0)$, we have}
\ICost_{(i+1)p} &\le 2\cdot \ICost_{ip}+2^{p}\cdot\left(\IB(\sigma,R_0)+2n+2\cdot\GCost_{(i+1)p}\right)\\
    &\le 2\cdot \ICost_{ip}+2^{p}\cdot\left(\IB(\sigma,R_0)+2n+2\cdot \GCost_{0}\right)\\
    &\le 2\cdot \ICost_{ip}+
        O\left( 2^{3p}\right)\cdot(\IB(\sigma,R_0)+n+m),
\intertext{and by combining these inequalities for all $i\in [0,p-1]$,}
\ICost_k &\le 2^p\cdot \ICost_0+O\left(2^{4p}\right)\cdot(\IB(\sigma,R_0)+n+m).
\intertext{Finally, note that $\ICost_0=\GCost_0=O\left(2^{2p}\right)\cdot(\IB(\sigma,R_0)+n+m)$, thus}
\ICost_k &=O\left(2^{4p}\right)\cdot(\IB(\sigma,R_0)+n+m)=2^{O\left(\sqrt {\log\log n}\right)}\cdot(\IB(\sigma,R_0)+n+m).
\end{align*}

A more refined analysis yields a smaller constant $2\sqrt 2$ in the exponent. 
See Appendix~\ref{sec:improvement}.

\section{Conclusion}\label{sect:conclusion}

In this paper we proved the first non-trivial competitiveness bound on one of the ``organic'' candidates for dynamic optimality, namely \Greedy.  The most important contribution of this work is the \emph{scaling} approach to cost analysis in which we define a series of progressively refined cost measures and analyze the relationship between successive measures in a chain.

Our main result is a competitiveness bound of $2^{O(\sqrt{\log\log n})}$, which is surely not tight.  
It is an artifact of the leading 
coefficient in \Cref{lem:v,lem:w} being \emph{2}.
If both \emph{2}s could be improved to \emph{1}s, 
this would immediately result in a $\text{poly}(\log\log n)$-competitiveness bound.

\bibliographystyle{alpha}
\bibliography{bibliography}

\appendix

\section{An Improvement of the Constant in the Exponent}
\label{sec:improvement}

Assume without loss of generality that $n=2^{2^k}$ for some $k=p(p+1)/2$ and integer $p=\sqrt{2}\cdot\sqrt{\log\log n}+O(1)$.

First consider the cost measures $\GCost_{p(p+1)/2-i(i+1)/2}$ for $0\le i\le p$. By \Cref{lem:v},

$$\GCost_{p(p+1)/2-i(i+1)/2}\le 2\cdot \GCost_{p(p+1)/2-i(i-1)/2}+2^{i}\cdot(2\cdot \IB(\sigma, R_0)+4n),$$

and by repeatedly expanding this inequality, we derive

$$\GCost_0\le 2^p\cdot \GCost_k+p\cdot 2^{p}\cdot(2\cdot \IB(\sigma,R_0)+4n).$$

Since $\GCost_k\le 2m$, we have

$$\ICost_0=\GCost_0\leq 4p\cdot 2^{p}\cdot(\IB(\sigma,R_0)+n+m).$$

Now, consider the cost measures $\ICost_{i(i+1)/2}$ for $0\leq i\leq p$.  By \Cref{lem:w}, we have

$$\begin{aligned}
    \ICost_{i(i+1)/2}&\le 2\cdot \ICost_{i(i-1)/2}+2^{i}\cdot\left(\IB(\sigma,R_0)+2n+2\cdot\GCost_{i(i+1)/2}\right)\\
    &\le 2\cdot \ICost_{i(i-1)/2}+2^{i}\cdot\left(\IB(\sigma,R_0)+2n+2\cdot \GCost_{0}\right)\\
    &\le 2\cdot \ICost_{i(i-1)/2}+10p\cdot 2^{p+i}\cdot(\IB(\sigma,R_0)+n+m).
\end{aligned}$$

and by repeatedly expanding this inequalities, we derive

$$\ICost_k\le 2^p\cdot \ICost_0+10p\cdot 2^{2p}\cdot(\IB(\sigma,R_0)+n+m).$$

Finally, note that $\ICost_0=\GCost_0\leq 4p\cdot 2^{p}\cdot(\IB(\sigma,R_0)+n+m)$, so we have

$$\ICost_k\le 14p\cdot 2^{2p}\cdot(\IB(\sigma,R_0)+n+m).$$

Recall that $p=\sqrt{2}\cdot\sqrt{\log\log n}+O(1)$, thus $\Greedy$ is $2^{(2\sqrt 2+o(1))\sqrt{\log\log n}}$-competitive when $m\ge n$.

\end{document}